\documentclass[12pt,twoside]{article}

 \usepackage{float}
\usepackage{graphicx}
\usepackage{epstopdf}
\usepackage{graphicx}
\usepackage{epic}
\usepackage{multirow}
\usepackage{tikz}
\usepackage{threeparttable}
\usepackage{xcolor}
\usepackage{rotating}

\usetikzlibrary{arrows,shapes,chains}

\usepackage[a4paper]{geometry}
\makeatletter
\renewcommand\title[1]{\gdef\@title{\reset@font\Large\bfseries #1}}
\renewcommand\section{\@startsection {section}{1}{\z@}%
                                   {-3.5ex \@plus -1ex \@minus -.2ex}%
                                   {2.3ex \@plus.2ex}%
                                   {\normalfont\large\bfseries}}
\renewcommand\subsection{\@startsection{subsection}{2}{\z@}%
                                     {-3ex\@plus -1ex \@minus -.2ex}%
                                     {1.5ex \@plus .2ex}%
                                     {\normalfont\normalsize\bfseries}}
\renewcommand\subsubsection{\@startsection{subsubsection}{3}{\z@}%
                                     {-2.5ex\@plus -1ex \@minus -.2ex}%
                                     {1.5ex \@plus .2ex}%
                                     {\normalfont\normalsize\bfseries}}

\def\@runningauthor{}\newcommand{\runningauthor}[1]{\def\runningauthor{#1}}
\def\@runningtitle{}\newcommand{\runningtitle}[1]{\def\runningtitle{#1}}

\renewcommand{\ps@plain}{%
\renewcommand{\@evenhead}{\footnotesize\scshape \hfill\runningauthor\hfill}
\renewcommand{\@oddhead}{\footnotesize\scshape \hfill\runningtitle\hfill}}

\newcommand{\F}{\mathbb{F}}

\newcommand{\Torus}{\mathcal{T}}
\newcommand{\TorusZero}{\mathcal{T}_0}
\newcommand{\Norm}{\operatorname{N}}
\newcommand{\ellT}{\ell_{\Torus}}
\newcommand{\wt}{\operatorname{wt}}

\newcommand{\be}{\begin{eqnarray}}
\newcommand{\ee}{\end{eqnarray}}
\newcommand{\nn}{{\nonumber}}

\newcommand{\ra}{\rightarrow}

\g@addto@macro\bfseries{\boldmath}

\makeatother

\usepackage{amsthm,amsmath,amssymb}
\usepackage{cite}

\usepackage[colorlinks=true,citecolor=black,linkcolor=black,urlcolor=blue]{hyperref}

\theoremstyle{plain}
\newtheorem{theorem}{Theorem}[section]

\newtheorem{lem}[theorem]{Lemma}
\newtheorem{cor}[theorem]{Corollary}
\newtheorem{prop}[theorem]{Proposition}

\theoremstyle{definition}
\newtheorem{definition}[theorem]{Definition}
\newtheorem{example}[theorem]{Example}

\newtheorem{algorithm}{Algorithm}

\theoremstyle{remark}
\newtheorem{remark}[theorem]{Remark}

\runningauthor{}

\date{}

\begin{document}

\title{Norm-One Torus Decompositions and Decoding of Gashkov-Sidel'nikov Codes\thanks{Minjia Shi was supported by the National Natural Science Foundation of China under Grant 12471490. 
Shitao Li was supported by the National Natural Science Foundation of China
under Grant 12526612.
Tor Helleseth was supported by the Research Council of Norway under Grant 247742/O70. Ferruh \"{O}zbudak is supported by	T\"{U}B$\dot{\mathrm{I}}$TAK under Grant 223N065.}
}
\author{Minjia Shi\thanks{Minjia Shi is with the Key Laboratory of Intelligent Computing and Signal Processing, Ministry of Education, School of Mathematical Sciences, Anhui University, Hefei, China (email: smjwcl.good@163.com).}, 
Shitao Li\thanks{Shitao Li is with the School of Internet, Anhui University, Hefei, Anhui 230039, China (email: lishitao0216@163.com).},
Yuhong Xia\thanks{Yuhong Xia is with the School of Mathematical Sciences, Anhui University, Hefei, China (email: yhxia88@163.com).}, 
Tor Helleseth\thanks{Tor Helleseth is with Department of Informatics, University of Bergen, Bergen, Norway (email: tor.helleseth@uib.no).},
Ferruh \"{O}zbudak\thanks{Ferruh \"{O}zbudak is with the Faculty of Engineering and Natural Sciences, Sabanc{\i} University, 34956 Istanbul, Turkiye (email: ferruh.ozbudak@sabanciuniv.edu).}}

\date{}
    \maketitle
    
\begin{abstract}
	Let $q=3^m$, let $K=\mathbb F_{q^2}$, and let
	\[
	\mathcal T=\{x\in K^*:\Norm_{K/\mathbb F_q}(x)=1\}.
	\]
For both cyclic and constacyclic Gashkov-Sidel'nikov codes, we show that the set of signed parity-check column labels is precisely $\mathcal T$. Consequently, the decoding problem separates into two stages: determining the minimum error weight associated with a syndrome $S$ and constructing an error vector attaining this minimum. 
We identify the former quantity with the minimum additive length of $S$ with respect to $\mathcal T$ and determine it exactly by the norm and the quadratic character of $\mathbb F_q$. 
We also determine the complete coset-weight distribution and recover the known covering
radius $3$. For the constructive part, we use quadratic-character sums and Weil bounds to construct a coset leader for every syndrome of coset weight three. The resulting procedures give complete maximum-likelihood decoders.
\end{abstract}

\noindent\textbf{Keywords.}
Gashkov-Sidel'nikov code, norm-one torus, additive
decomposition, syndrome decoding, quadratic character.

\medskip
\noindent\textbf{Mathematics Subject Classification (2020).}
11T06, 11T23, 11T71, 94B35.

\section{Introduction}
For a linear code with parity-check matrix $P$, nearest-neighbor decoding can be formulated as the problem of finding a minimum-weight error vector with a prescribed syndrome. Given a received word with syndrome $S$, one seeks an error vector ${\bf e}$ of minimum Hamming weight satisfying
${\bf e}P^{\mathsf T}=S.$
For a syndrome $S$, define the associated coset weight
\begin{equation*}
\omega_P(S)=\min\{\wt(\boldsymbol e):\boldsymbol eP^{\mathsf T}=S\}.
\end{equation*}
The decoding problem naturally separates into two problems. The first is the \emph{value problem} of determining $\omega_P(S)$; the second is the \emph{search problem} of constructing an error vector $\boldsymbol e_S$ satisfying
\begin{equation}\label{eq:intro-search-problem}
	\boldsymbol e_SP^{\mathsf T}=S ~{\rm and}~
	\wt(\boldsymbol e_S)=\omega_P(S).
\end{equation}
Over the ternary symmetric channel with crossover probability $0<p<2/3$, this is equivalent to maximum-likelihood decoding. For general linear codes, finding a minimum-weight vector with a prescribed syndrome is hard \cite{NP-Hard}. However, the two stages above can sometimes be converted into additive problems for code families with substantial algebraic structure. The purpose of this paper is to develop such a reformulation for the ternary Gashkov-Sidel'nikov codes and to solve both the resulting value and search problems.

In 1986, Gashkov and Sidel'nikov \cite{GS} introduced two families of ternary double-error-correcting codes. For $q=3^m$, both families have parameters
\[
\left[\frac{q+1}{2},\,\frac{q+1}{2}-2m,\,5\right]_3
\]
and covering radius $3$. The cyclic family occurs for even $m$, whereas the constacyclic family occurs for odd $m$; we denote the two families by $C_m$ and $D_m$, respectively.
Let $K=\F_{q^2}$, write $\overline{x}=x^q$. Fix an $\F_3$-linear isomorphism
\[
\phi:K\longrightarrow\F_3^{2m}.
\]
For a column-label vector $p=(p_0,\ldots,p_{n-1})\in K^n$, define the ordinary ternary matrix
\[
P_\phi=\bigl(\phi(p_0)\ \phi(p_1)\ \cdots\ \phi(p_{n-1})\bigr)
\in\F_3^{2m\times n}.
\]
For $\boldsymbol e=(e_0,\ldots,e_{n-1})\in\F_3^n$, we identify the ordinary syndrome $\boldsymbol eP_\phi^{\mathsf T}$ with the $K$-valued syndrome
\begin{equation}\label{eq:K-valued-syndrome}
	\sigma_p(\boldsymbol e)=\sum_{j=0}^{n-1}e_jp_j\in K.
\end{equation}
This identification is fixed throughout the paper.

\begin{definition}[Cyclic Gashkov-Sidel'nikov code \cite{GS}]\label{cyclic}
	Assume that $m\geq2$ is even and put $n=(q+1)/2$. Let $\beta\in K^*$ have order $n$, and set
	\[
	p_C=(1,\beta,\beta^2,\ldots,\beta^{n-1}).
	\]
	The cyclic Gashkov--Sidel'nikov code $C_m$ is the ternary code with parity-check matrix
	\[
	P_C=\bigl(\phi(1)\ \phi(\beta)\ \cdots\ \phi(\beta^{n-1})\bigr).
	\]
\end{definition}

\begin{definition}[Constacyclic Gashkov-Sidel'nikov code \cite{GS}]\label{constacyclic}
	Assume that $m\geq3$ is odd and put $n=(q+1)/2$. Let $\beta\in K^*$ have order $n$, and let $\theta\in K^*$ have order $4$. Set
	\[
	p_D=\left(
		1, \beta, \cdots,\beta^{n/2 -1} , \theta , \theta \beta , \cdots , \theta \beta^{n/2-1} 
		\right).
	\]
	The constacyclic Gashkov--Sidel'nikov code $D_m$ is the ternary code with parity-check matrix
	\[
	P_D=\bigl(\phi(1)\ \phi(\beta)\ \cdots\ \phi(\beta^{n/2 -1})\ 
	\phi(\theta)\ \phi(\theta \beta)\ \cdots \ \phi( \theta \beta^{n/2-1} )\bigr).
	\]
\end{definition}

In the nontrivial ranges considered here, namely even $m\geq2$ and odd $m\geq3$, the codes are quasi-perfect. They are closely related to the Zetterberg codes \cite{Z}, whose quasi-perfectness and algebraic decoding were studied in \cite{SM,K,DN}. 
The determination of covering radii and coset leaders for structured code families remains an active topic in coding theory. 
For highly structured algebraic codes, however, exact covering-radius and deep-hole problems continue to be studied using finite-field, character-sum, and geometric methods; see, for example,\cite{BBB2023,DMP2025,FXZ2025,SHO,SLHO,XY}. The covering radius of the original ternary families was already known and is not used as an input below. Our objective is instead to determine the coset weight for every syndrome and to construct an explicit coset leader attaining that weight. Such a complete constructive description has not previously been available for both original ternary families.

The main point is that both families are governed by the same finite-field object. Let
\[\Norm(x)=x\overline{x}\]
be the norm from $K$ to $\F_q$. The norm-one group
\[
\Torus=\{x\in K^*: \Norm(x)=1\}
\]
is the group of $\F_q$-rational points of the one-dimensional norm-one torus and has order $q+1$. 
In the cyclic case, the columns of the parity-check matrix form the index-two subgroup of $\Torus$, and adjoining their negatives gives all of $\Torus$.  In the constacyclic case, the columns and their negatives again partition $\Torus$.  Consequently, in both families there is a bijection between signed coordinate positions and elements of $\Torus$. This observation turns syndrome decoding into a problem about additive length in $\Torus$.

First, we study the value problem of determining $\omega_P(S)$.
For $S\in K$, define the additive length
\[
\ellT(S)=\min\{r\geq0:S=t_1+\cdots+t_r,\ t_i\in\Torus\},
\]
where the empty sum is $0$. For cyclic and constacyclic Gashkov-Sidel'nikov codes, we prove 
$$\omega_P(S)=\ellT(S).$$ 
Thus the value problem is equivalent to determining the additive length of $S$ with respect to $\Torus$. Our main result gives the complete answer to this additive problem.

\begin{theorem}\label{thm:main}
Let $q=3^m$ with $m\geq1$, let $K=\F_{q^2}$, and let $\chi$ be the quadratic character of $\F_q$, extended by $\chi(0)=0$.
	\begin{enumerate}
		\item [(1)] For every $S\in K$,
		\[
		\ellT(S)=
		\begin{cases}
			0, & S=0,\\[1mm]
			1, & \Norm(S)=1,\\[1mm]
			2, & S\neq0,\ \Norm(S)\neq1,\text{ and }
			\chi\!\left(1-\Norm(S)^{-1}\right)=-1,\\[1mm]
			3, & \text{otherwise}.
		\end{cases}
		\]
		\item [(2)] If $\ellT(S)=2$, then the two summands are the distinct roots in $\Torus$ of
		\begin{equation}\label{eq:intro-quadratic}
			f_S(X)=X^2-SX+\frac{S}{\overline S}.
		\end{equation}
		The decomposition is unique up to the ordering of the two roots.
		
		\item [(3)] The sumsets satisfy
		\[
		|\Torus+\Torus|=\frac{q^2+2q+3}{2},
		\quad
		\Torus+\Torus+\Torus=K.
		\]
		The numbers of elements of additive lengths $0,1,2$, and $3$ are, respectively,
		\[
		1,\quad q+1,\quad \frac{q^2-1}{2},\quad
		\frac{(q-3)(q+1)}{2}.
		\]
	\end{enumerate}
\end{theorem}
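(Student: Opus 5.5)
The plan is to settle the length-two question exactly by Vieta's formulas, and then get everything else by bookkeeping plus one existence argument for length three.

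\emph{Lengths $0$ and $1$.} These are immediate from the definition of $\ellT$.

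\emph{Length two, necessity.} Suppose $S=t_1+t_2$ with $t_i\in\Torus$. Conjugating and using $\overline{t_i}=t_i^{-1}$ gives
\[
\overline S=t_1^{-1}+t_2^{-1}=\frac{S}{t_1t_2}.
\]
Hence $S\neq0$ forces $t_1t_2=S/\overline S$, so $t_1,t_2$ are the roots of $f_S$. This already proves the uniqueness claim in (2).

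\emph{Length two, sufficiency.} Take $S\neq0$ with $\Norm(S)\neq1$ and write $N=\Norm(S)$. In characteristic $3$ the discriminant is
\[
D=S^2-4S/\overline S=S^2-S/\overline S=\frac{S}{\overline S}\,(N-1)\neq0,
\]
so the roots $x_1\ne x_2$ are distinct. They lie in $K$, because every element of $\F_q$ and every element of $\Torus$ is a square in $K$ (recall $(q+1)\mid (q^2-1)/2$).

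Next, conjugating $f_S(x)=0$ shows that $x\mapsto 1/\overline x$ permutes the roots of $f_S$. If one root lies in $\Torus$, then so does the other, since their product $S/\overline S$ has norm $1$. So either both roots are fixed by this involution (both lie in $\Torus$), or the involution swaps them.

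To tell the two cases apart, I would put $u=x_1-x_2$ and $v=u\overline S$. Then
\[
v^2=D\,\overline S^{\,2}=N(N-1)=N^2\bigl(1-N^{-1}\bigr)\in\F_q^*.
\]
A direct computation gives $\overline u=-u\,\overline S/S$ in the fixed case and $\overline u=u\,\overline S/S$ in the swapped case. Equivalently, $\overline v=-v$ in the fixed case and $\overline v=v$ in the swapped case. Therefore both roots lie in $\Torus$ if and only if $v\notin\F_q$. Since $v^2\in\F_q^*$, this holds if and only if $N(N-1)$ is a nonsquare in $\F_q$, that is, $\chi(1-N^{-1})=-1$. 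This proves (2) and the case $\ellT(S)=2$ of (1).

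\emph{Counting for (3).} Each norm value $c\in\F_q^*$ is attained by exactly $q+1$ elements of $K$. As $c$ runs over $\F_q^*\setminus\{1\}$, the quantity $1-c^{-1}$ runs over $\F_q\setminus\{0,1\}$. This set contains $(q-1)/2$ nonsquares, because $1$ is a square. So there are $(q-1)(q+1)/2=(q^2-1)/2$ elements of length $2$. Adding the one element of length $0$ and the $q+1$ elements of length $1$ gives $|\Torus+\Torus|=(q^2+2q+3)/2$. The remaining count, $(q-3)(q+1)/2$, will be the number of length-$3$ elements once $\Torus+\Torus+\Torus=K$ is established.

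\emph{Main obstacle: $\Torus+\Torus+\Torus=K$.} For each remaining $S$ I need some $t\in\Torus$ such that $S-t$ has length at most $2$. Concretely, I want $t$ with $\Norm(S-t)=1$ or $\chi\bigl(1-\Norm(S-t)^{-1}\bigr)=-1$.

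First I would try pigeonhole. It fails here, since $|S-\Torus|=q+1$ while the complement of $\Torus+\Torus$ has about $q^2/2$ elements.

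So I would instead parametrize $\Torus\setminus\{-1\}$ rationally as $t=(a+\omega)/(a+\overline\omega)$ with $a\in\F_q$. Then $\Norm(S-t)$ is a rational function $R(a)$ of degree $2$ over $\F_q$. I would count the $a$ satisfying the nonsquare condition through
\[
\sum_{a\in\F_q}\Bigl(1-\chi\bigl(R(a)(R(a)-1)\bigr)\Bigr),
\]
and bound the character sum by Weil's bound, roughly $3\sqrt q$ (about $O(\sqrt q)$). This makes the count positive for $q\ge 27$. Along the way I must check that $R(R-1)$ is not a constant times a square; I expect this to hold because $S$ has length $3$.

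The small cases need separate treatment. For $q=3$ the length-$3$ set is empty by the count. The case $q=9$ I would verify directly, either by a short hand computation or by a slicker pairing argument if one appears.
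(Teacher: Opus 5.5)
Your treatment of lengths at most two and the counting in (3) are correct and parallel the paper. Your $v=(x_1-x_2)\overline S$ is just $\Norm(S)\,\delta$ for the paper's $\delta=(t_1-t_2)/S$. Where the paper writes down the roots $-S\pm S\delta$ and checks their norms, you use the involution $x\mapsto 1/\overline x$ on the roots of $f_S$; both work.

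The real divergence is the proof of $\Torus+\Torus+\Torus=K$. The pigeonhole you discarded does work if you apply it to norms rather than to elements. By your own criterion, whether $S-\beta$ has length two depends only on $\Norm(S-\beta)$. For fixed $S\notin\Torus+\Torus$, the map $\beta\mapsto\Norm(S-\beta)$ from $\Torus$ to $\F_q^*$ is at most two-to-one. Indeed, using $\overline\beta=\beta^{-1}$, the equation $\Norm(S-\beta)=a$ becomes $\overline S\beta^2-(\Norm(S)+1-a)\beta+S=0$. So the image has at least $(q+1)/2$ elements. The good set $\{a\in\F_q^*:\chi(1-a^{-1})=-1\}$ has $(q-1)/2$ elements, and both lie in the $(q-1)$-element set $\F_q^*$, so they meet. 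This is the paper's argument; it is elementary and uniform in $q$.

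Your Weil-bound route also works for $q\ge27$, and it gives more: roughly half of all $t\in\Torus$ are admissible first summands. As written, however, it leaves two things open.

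\emph{Nondegeneracy.} You only say you expect it to hold, but it needs a proof. Write $R=P/Q$ with $P(a)=\Norm\bigl((S-1)a+S\overline\omega-\omega\bigr)$ and $Q(a)=\Norm(a+\overline\omega)$.
\begin{itemize}
\item $P$ has degree two since $S\neq1$, and it has no zero on $\F_q$ because $S\notin\Torus$.
\item $P-Q$ has no zero on $\F_q$, and its leading coefficient $\Norm(S-1)-1$ is nonzero; both facts follow from $S\notin\Torus+\Torus$.
\end{itemize}
So both are irreducible quadratics. They are not proportional, since otherwise $R$ would be constant. Hence $P(P-Q)$ is a squarefree quartic and Weil applies.

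\emph{The case $q=9$.} Your estimate gives only a count of at least $(q-3\sqrt q)/2=0$ there. So this case is unproved until you actually carry out the finite check. The norm pigeonhole makes both this check and the Weil bound unnecessary.

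A minor slip: $a\mapsto(a+\omega)/(a+\overline\omega)$ misses $t=1$, not $t=-1$.
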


The proof has two parts. For lengths at most two, conjugation in $K/\F_q$ converts a two-term decomposition into the quadratic equation \eqref{eq:intro-quadratic}; its normalized discriminant is $1-\Norm(S)^{-1}$. This gives the exact length-two criterion and the size of $\Torus+\Torus$. For the remaining case, we study the map
\[
\beta\longmapsto \Norm(S-\beta),
\quad \beta\in\Torus.
\]
Every fiber has size at most two. A cardinality argument then produces a first summand $\beta$ for which $S-\beta$ has length two, and the remaining two summands are obtained from the same quadratic equation.

Theorem~\ref{thm:main} has the following coding-theoretic consequence.

\begin{cor}\label{cor:main-codes}
	Let $q=3^m$. For $m\geq2$ even, let $C_m$ be the cyclic Gashkov-Sidel'nikov code, and for $m\geq3$ odd, let $D_m$ be the constacyclic Gashkov-Sidel'nikov code. For either family, the coset weight associated with a syndrome $S$ is $\ellT(S)$. Consequently, the numbers of cosets of minimum weights $0,1,2,$ and $3$ are
	\[
	1,\quad q+1,\quad \frac{q^2-1}{2},\quad \frac{(q-3)(q+1)}{2},
	\]
	respectively. In particular, both code families have covering radius $3$ and are quasi-perfect.
\end{cor}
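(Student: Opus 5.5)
The corollary follows once we know that, for both families, the signed column labels $\{\pm p_j\}$ are exactly $\Torus$, each element of $\Torus$ occurring exactly once. After that, Theorem~\ref{thm:main} does the counting. So the plan has three steps: establish the signed-label bijection, identify coset weight with $\ellT$, and then count.

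\textbf{Signed-label bijection.} Put $n=(q+1)/2$.

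In the cyclic case, $\beta$ has order $n$, which divides $q+1$, so every power of $\beta$ lies in $\Torus$. The labels are therefore exactly the subgroup of $\Torus$ of index two. Since $m$ is even, $q\equiv1\pmod 4$, so $n=(q+1)/2$ is odd. Hence $-1$, which has order $2$ and lies in $\Torus$, is not in $\langle\beta\rangle$. It follows that $\Torus=\langle\beta\rangle\sqcup(-\langle\beta\rangle)$.

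In the constacyclic case, $m$ is odd, so $q\equiv3\pmod 4$ and $4\mid q+1$. Then $\theta$, of order $4$, lies in $\Torus$, and $\theta^2=-1$. Here $n$ is even and $\langle\beta\rangle$ has index $2$ in the cyclic group $\Torus$. Because $\Torus$ is cyclic, an element lies in $\langle\beta\rangle$ exactly when it is a square in $\Torus$.
\begin{itemize}
\item The element $-1=\theta^2$ lies in $\langle\beta\rangle$, and so does $\beta^{n/2}=-1$.
\item The element $\theta$ does not: $\theta$ has order $4$, and if $\theta$ were a square in $\Torus$ its square root would have order $8$. Such an element exists only if $8\mid q+1$, but $q+1\equiv 4\pmod 8$ for odd $m$.
\end{itemize}
Therefore $\Torus=\langle\beta\rangle\sqcup\theta\langle\beta\rangle$. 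Moreover $\langle\beta\rangle=\{\beta^i\}_{i<n/2}\sqcup(-\{\beta^i\}_{i<n/2})$, since $\beta^{n/2}=-1$. Multiplying by $\theta$ gives the same splitting for the second coset. So the $2n=q+1$ signed labels again exhaust $\Torus$ without repetition.

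\textbf{Coset weight equals $\ellT$.} An error $\boldsymbol e$ with $\wt(\boldsymbol e)=r$ has syndrome $\sigma_p(\boldsymbol e)$, which is a sum of $r$ elements $e_jp_j\in\Torus$, one for each support position. Hence $\omega_P(S)\ge\ellT(S)$.

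Conversely, suppose $S=t_1+\cdots+t_r$ with $r=\ellT(S)$. We need the $t_i$ to come from distinct coordinates. If two of them came from the same coordinate, they would be equal or negatives of each other.
\begin{itemize}
\item If $t_i=-t_k$, the two terms cancel, which contradicts minimality of $r$.
\item If $t_i=t_k$, we can use $2t=-t$ in characteristic $3$ to replace the pair by the single element $-t\in\Torus$, again contradicting minimality.
\end{itemize}
So minimal decompositions use distinct coordinates. Mapping each $t_i$ to its unique signed position gives an error of weight $r$ with syndrome $S$. Thus $\omega_P(S)=\ellT(S)$.

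\textbf{Counting.} The syndrome map $\F_3^n\to K$ is onto, since $\Torus+\Torus+\Torus=K$. Cosets therefore correspond bijectively to elements $S\in K$. Theorem~\ref{thm:main}(3) then gives the coset counts $1$, $q+1$, $(q^2-1)/2$ and $(q-3)(q+1)/2$ for weights $0,1,2,3$.

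The count for weight $3$ is nonzero because $q\ge9$, so the covering radius is $3$. The minimum distance is $5$, so every vector of weight at most $2$ is the unique leader of its coset, i.e.\ the codes are $2$-error-correcting. Together with covering radius $3$, this is exactly quasi-perfectness.

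The only delicate point is the parity argument showing $\theta\notin\langle\beta\rangle$ in the constacyclic case; the rest is bookkeeping.
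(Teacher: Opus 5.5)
Your proof is correct and follows the paper's argument. You re-derive inline, by the same reasoning, the signed-column bijection (Proposition~\ref{prop:signed-columns}) and the identity $\omega_P(S)=\ellT(S)$ (Proposition~\ref{prop:coset-length}), and then read the distribution off Theorem~\ref{thm:main}(3); the only small deviation is that you get surjectivity of the syndrome map, and hence the $q^2$ cosets, from $\Torus+\Torus+\Torus=K$ rather than from the known codimension $2m$, which makes that step self-contained.
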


Theorem~\ref{thm:main} solves the value problem of determining $\omega_P(S)$. We next solve the search problem of constructing an error vector $\boldsymbol e_S$ satisfying \eqref{eq:intro-search-problem}. For coset weights at most two, a coset leader is recovered from the norm test, the roots of \eqref{eq:intro-quadratic}, and the inverse signed-column map. For coset weight three, the proof of the three-fold sumset identity already gives a direct search-based construction. Algorithms~\ref{alg:cyclic} and \ref{alg:constacyclic} retain this constructive objective but organize the three-term decomposition through one-dimensional searches in conic coordinates. Their parameter choices are controlled by quadratic-character sums and Weil bounds. The resulting algorithms satisfy the following statement.
 
\begin{theorem}\label{thm:main-decoding}
	Let $S$ be a syndrome.
	\begin{enumerate}
		\item [(1)] For every even $m\geq2$, the procedure using Lemma~\ref{lem:short-decomposition} and Remark~\ref{rem:inverse-signed-columns} when $\omega_{P_C}(S)\le2$, and Algorithm~\ref{alg:cyclic} when $\omega_{P_C}(S)=3$, outputs an error vector $\boldsymbol e_S\in\F_3^n$ such that
		\[
		\sigma_{p_C}(\boldsymbol e_S)=S,
		\quad
		\wt(\boldsymbol e_S)=\omega_{P_C}(S).
		\]
		\item [(2)] For every odd $m\geq3$, the procedure using Lemma~\ref{lem:short-decomposition} and Remark~\ref{rem:inverse-signed-columns} when $\omega_{P_D}(S)\le2$, and Algorithm~\ref{alg:constacyclic} when $\omega_{P_D}(S)=3$, outputs an error vector $\boldsymbol e_S\in\F_3^n$ such that
		\[
		\sigma_{p_D}(\boldsymbol e_S)=S,
		\quad
		\wt(\boldsymbol e_S)=\omega_{P_D}(S).
		\]
	\end{enumerate}
	For a received vector $\boldsymbol r$, let $p=p_C$ in part~\textnormal{(1)} and $p=p_D$ in part~\textnormal{(2)}. The decoded codeword is
	\[
	\widehat{\boldsymbol c}
	=\boldsymbol r-\boldsymbol e_{\sigma_p(\boldsymbol r)}.
	\]
	Consequently, both procedures are complete nearest-neighbor decoders. Over the ternary symmetric channel with $0<p<2/3$, they are also maximum-likelihood decoders.
\end{theorem}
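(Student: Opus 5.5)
The proof reduces Theorem~\ref{thm:main-decoding} to three ingredients: the signed-column bijection, Theorem~\ref{thm:main}, and the correctness of the weight-three algorithms. First I would record the bijection. In the cyclic case, $\beta$ has order $n=(q+1)/2$, so $\{\beta^j\}$ is the index-two subgroup of $\Torus$. Since $q\equiv 3\pmod 4$ fails only for odd $m$, for even $m$ we have $(q+1)/2$ odd, so $-1\notin\langle\beta\rangle$ and $\Torus=\langle\beta\rangle\sqcup-\langle\beta\rangle$. In the constacyclic case ($m$ odd, $(q+1)/2$ even), $\theta$ of order $4$ lies in $\Torus$ because $4\mid q+1$. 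Its coset together with $\langle\beta^{\,}\rangle$ and the negatives partitions $\Torus$, since $-1=\beta^{n/2}$ and $\theta\notin\langle\beta\rangle$ when $4\nmid n$; if $4\mid n$, I would check the coset structure directly.

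Hence every $t\in\Torus$ equals $\varepsilon p_j$ for a unique pair $(j,\varepsilon)\in\{0,\dots,n-1\}\times\{\pm1\}$, which is the content of Remark~\ref{rem:inverse-signed-columns}. This gives $\omega_P(S)=\ellT(S)$. A weight-$w$ error vector yields a sum of $w$ elements of $\Torus$. Conversely, a decomposition $S=t_1+\cdots+t_r$ with $r=\ellT(S)$ has pairwise distinct positions $j$: if two $t_i$ shared a position, they would either cancel, when $t_a=-t_b$, or combine as $2p_j=-p_j$, and either outcome gives a shorter decomposition.

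Next I would handle coset weight at most two. If $S=0$, output $\boldsymbol e_S=0$. If $\Norm(S)=1$, map $S$ through the inverse signed-column map. If $\ellT(S)=2$, which Theorem~\ref{thm:main}(1) detects via $\chi(1-\Norm(S)^{-1})=-1$, Lemma~\ref{lem:short-decomposition} and part (2) give the two roots of $f_S$ in $\Torus$. Their sum is $S$, and the two roots are distinct and are not negatives of each other, since otherwise $S=0$. So they occupy distinct positions and produce a weight-two vector with syndrome $S$.

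The main work is weight three. Here I would verify that Algorithm~\ref{alg:cyclic} (respectively \ref{alg:constacyclic}) always terminates with some $t_1\in\Torus$ such that $S-t_1$ has $\Norm\neq0,1$ and $\chi(1-\Norm(S-t_1)^{-1})=-1$. The remaining two summands then come from $f_{S-t_1}$, and $\ellT(S)=3$ guarantees that the three resulting positions are distinct by the argument above. Existence of a good $t_1$ follows from the counting argument behind $\Torus+\Torus+\Torus=K$: fibers of $t\mapsto\Norm(S-t)$ have size at most two. For the algorithms' specific one-dimensional conic parametrization, however, I would estimate
\[
\sum_{u}\chi\bigl(g_S(u)\bigr),
\]
where $g_S$ is the rational function $1-\Norm(S-t(u))^{-1}$ in a conic coordinate $u$. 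I would apply the Weil bound to show that this sum is at most a constant times $\sqrt q$ in absolute value, after verifying that $g_S$ is not a constant times a square. Then the number of $u$ with $\chi(g_S(u))=-1$ is at least $(q-c\sqrt q)/2>0$ for $m$ in range. Small exceptional $q$, such as $q=9$ and $q=27$, would be checked directly or covered by the exhaustive counting fallback. I expect the non-square verification of $g_S$ and the size of the constant in the Weil bound to be the main obstacle.

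Finally, the decoder statement follows. Since $\sigma_p(\boldsymbol r-\boldsymbol e_S)=0$, the output is a codeword, and its distance from $\boldsymbol r$ is $\omega_P(S)$, which is minimal over the coset. This gives nearest-neighbor decoding. Over the ternary symmetric channel with $p<2/3$, likelihood decreases with Hamming distance, so nearest-neighbor decoding is maximum-likelihood decoding.
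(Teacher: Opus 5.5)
Your treatment of the signed-column bijection, of $\omega_P(S)=\ellT(S)$, of coset weights at most two, and of the final nearest-neighbour/ML step is correct and agrees with the paper. Two minor slips: $q\equiv3\pmod 4$ holds exactly for odd $m$, and the case $4\mid n$ never arises, since $q\equiv3\pmod 8$ for odd $m$.

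The genuine gap is the weight-three case. The theorem asserts that Algorithms~\ref{alg:cyclic} and~\ref{alg:constacyclic}, as specified, output a coset leader, and your argument never engages with their steps. What you describe (find $t_1$ with $S-t_1$ of length two, then take the roots of $f_{S-t_1}$) is the procedure of Proposition~\ref{prop:direct-decoder}. The algorithms instead do the following:
\begin{itemize}
\item normalize $S$ to $\alpha h$, $w_2\alpha h$ or $(1-\theta)\alpha h$ with $\alpha\in\F_q^*$, via $z=S^{q-1}$ and a square root $h$;
\item search over a conic coordinate;
\item solve the auxiliary quadratic $A(x_1)x_2^2+B(x_1)x_2+C(x_1)=0$;
\item define $y_2$ and $h_3$ by explicit formulas.
\end{itemize}
A proof must show three things. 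First, the branches are exhaustive and $\alpha$ is nondegenerate ($\alpha\neq\pm1$ where required). Second, the Step~2 quadratic has roots in $\F_q$; this is what $A(x_1)\neq0$ and $\Delta(x_1)$ a nonzero square provide, $\Delta$ being its discriminant in characteristic $3$. Third, $h_1,h_2,h_3$ have norm one and the required sum. The paper gets the third point from identities such as $A(x_1)x_2^2+B(x_1)x_2+C(x_1)=(1-x_1^2)(1-x_2^2)-R^2$ and $x_3^2+D_2y_3^2-1=R-D_2y_1y_2$, with analogues in every branch. In the first cyclic branch one can check that the $\Delta$-condition is equivalent to $\chi(1-\Norm(\alpha-h_1)^{-1})=-1$, so your picture is the right heuristic. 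However, that equivalence, and the fact that Steps~2--4 return the roots of $f_{\alpha-h_1}$, have to be derived from the algorithm's formulas, and your proposal does not do this.

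Second, your termination argument is only sketched and, as set up, counts the wrong set. The algorithms search over $x_1$ (or $y_1$) subject to a side condition: $1-x_1^2$ a nonsquare, $1-D_2y_1^2$ a nonzero square, or $1-x_1^2$ resp.\ $2-x_1^2$ a nonzero square. This excludes some torus points, e.g.\ $h_1=\pm1$, where Step~3 would divide by zero. So one needs a joint count of two character conditions. The paper uses $N_1=\sum\bigl(1\mp\chi(g)\bigr)\bigl(1+\chi(\Delta)\bigr)$ together with the factorization $\Delta=-\alpha\,g\,Q_\alpha$ into coprime separable quadratics. Then the $\chi(\Delta)$ sum is a squarefree-quartic sum (Weil bound $3\sqrt q$), the cross term collapses to a quadratic character sum, and one gets $N_1\geq q-3\sqrt q-28$. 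Your single sum $\sum_u\chi(g_S(u))$ does not encode the side condition, and you leave the non-square verification open.

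Your ``exhaustive counting fallback'' for $q=9,27$ also fails. In the first cyclic branch, $\Norm(\alpha-h_1)=\alpha^2+\alpha x_1+1$ is injective in $x_1$, and the admissible $x_1$ give exactly $(q-1)/2$ norm values. The set of good values also has $(q-1)/2$ elements, both inside the $(q-1)$-element set $\F_q^*$, so pigeonhole guarantees nothing. The fibre argument of Lemma~\ref{lem:sumsets} only yields a good first summand somewhere in $\Torus$, possibly only at the excluded points $h_1=\pm1$. For these small fields you would need a direct computation, as the paper does with Magma.
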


Theorem~\ref{thm:main} is the structural core of the paper, whereas Theorem~\ref{thm:main-decoding} makes the search stage explicit. The common additive geometry is independent of the parity of $m$; the code-specific differences occur in the normalization and in the final conversion of torus summands into coordinate positions and nonzero error values.

The paper is organized as follows. Section~\ref{sec2} establishes the signed-column model, and proves Theorem~\ref{thm:main} and Corollary~\ref{cor:main-codes}. Section~\ref{sec3} develops Algorithm~1 and proves Theorem~\ref{thm:main-decoding} (1). Section~\ref{sec4} develops Algorithm~2 and proves Theorem~\ref{thm:main-decoding} (2). Section~\ref{sec5} concludes the paper.

\section{The norm-one torus and proof of the main theorem}\label{sec2}

Let $\F_{q_0}$ denote the {\em Galois field} with $q_0$ elements, and let $\F_{q_0}^*$ denote its multiplicative group. The {\em (Hamming) weight} of a vector  ${\bf c} \in \F_{q_0}^n$ is the number of nonzero components in ${\bf c}$. A {\em $q_0$-ary linear $[n,k,d]$ code} $C$ is a $k$-dimensional subspace of $\F_{q_0}^n$ with {\em minimum distance} $d$, where $d$ is the minimum nonzero weight of all $n$-tuples (called
codewords) in $C$.
For any vector ${\bf a} \in \F_{q_0}^n$, the {\em coset} of $C$ determined by ${\bf a}$ is defined by
$${\bf a} + C = \{{\bf a} + {\bf v} ~|~ {\bf v}\in C\}.$$
A {\em coset leader} of a coset is a vector of smallest weight in the coset, the {\em weight} of a coset is defined as the weight of its coset leader. The {\em packing radius} $t(C)$ of a linear $[n,k,d]$ code $C$ is given by 
$t(C)=\left\lfloor\frac{d-1}{2}\right\rfloor$.
The {\em covering radius} $\rho(C)$ of $C$ is the smallest integer $\rho$ such that the spheres of radius $\rho$ around the codewords cover the Hamming space $\F_q^n$. Equivalently, $\rho(C)$ is also the maximum weight among all coset leaders of the code. The covering radius is a fundamental geometric invariant of a code and is closely connected to objects in finite geometry, such as complete caps, blocking sets, and saturating sets in projective spaces; see, for example, \cite{BLMS,LMS,HS,JLMS-1,JLMS-2}.

The main goal of this paper is to develop a decoding algorithm for ternary Gashkov-Sidelnikov codes. We first review the main principles of maximum-likelihood decoding; further details can be found in the classical monographs \cite{LX,MS}. Let $C$ be a linear code with parity-check matrix $P$. Assume the codeword ${\bf v}\in C$ is transmitted and the word {\bf w} is received, resulting in the error pattern 
$${\bf e} = {\bf w} -{\bf v} \in {\bf w} + C.$$
Then ${\bf w}-{\bf e} = {\bf v} \in C$, and both the error pattern ${\bf e}$ and the received word ${\bf w}$ are in the same coset. Note that error patterns of small weight are most likely under the nearest-neighbor decoding rule for linear codes. Hence, upon receiving the word ${\bf w}$, we choose a word ${\bf e}$ of least weight (that is the coset leader) in the coset ${\bf w} + C$ and conclude that ${\bf v} = {\bf w}-{\bf e}$ was the codeword transmitted.

For any ${\bf u}\in \F_{q}^n$, the {\em syndrome} of ${\bf u}$ is the word $S_{\bf u} = {\bf u} P^T$. 
It follows that the syndrome of a received vector ${\bf w}$ is $S_{\bf w} = {\bf w} P^T={\bf e}P^T$ since ${\bf v}P^T={\bf 0}$. By \cite[Remark 4.8.12]{LX}, there is a one-to-one correspondence between cosets and syndromes. A complete decoding algorithm identifies the coset leader corresponding to the received syndrome, which represents the most likely error.

\subsection{Short decompositions and sumsets}

Throughout this paper, let $q=3^m$, let $K=\F_{q^2}$, and write $\overline{x}=x^q$. The norm map is $\Norm(x)=x\overline{x}$, and
\[
\Torus=\{x\in K^*:x^{q+1}=1\}
\]
is the cyclic norm-one group of order $q+1$. We write
\[
\TorusZero=\{t^2:t\in\Torus\}
\]
for its index-two subgroup.

\begin{prop}\label{prop:signed-columns}
	For either $C_m$ or $D_m$, let $p_0,\ldots,p_{n-1}$ be the corresponding column labels. The map
	\begin{equation}\label{eq:signed-column-map}
		\iota_P:\{0,\ldots,n-1\}\times\F_3^*
		\longrightarrow\Torus,
		\quad
		(j,u)\longmapsto up_j,
	\end{equation}
	is a bijection.
	
	More precisely, in the cyclic case the column-label set is $\TorusZero$ and
	\[
	\Torus=\TorusZero\sqcup(-\TorusZero).
	\]
	In the constacyclic case, with
	\[
	\mathcal B=\{1,\beta,\ldots,\beta^{n/2-1}\},
	\quad
	\mathcal P_D=\mathcal B\sqcup\theta\mathcal B,
	\]
	one has
	\[
	\TorusZero=\mathcal B\sqcup(-\mathcal B)
	\quad\text{and}\quad
	\Torus=\mathcal P_D\sqcup(-\mathcal P_D)
	=\mathcal B\sqcup\theta\mathcal B\sqcup(-\mathcal B)\sqcup(-\theta\mathcal B).
	\]
\end{prop}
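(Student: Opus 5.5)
The argument rests on elementary facts about the cyclic group $\Torus$ of order $q+1$. Since $q=3^m$ is odd, $q+1$ is even, so $\TorusZero$ is the unique subgroup of index two, of order $n=(q+1)/2$. Moreover, $-1\in\Torus$ because $(-1)^{q+1}=1$.

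\textbf{Cyclic case.} I first show that $\beta\in\TorusZero$. Any element of $K^*$ of order $n$ lies in the unique cyclic subgroup of $K^*$ of order $n$. That subgroup is contained in $\Torus$, because $n\mid q+1$ and $K^*$ is cyclic. Inside $\Torus$ it is the unique subgroup of order $n$, namely $\TorusZero$. Since $\beta$ has order $n$, the labels $1,\beta,\ldots,\beta^{n-1}$ are exactly $\TorusZero$.

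Next I show that $-1\notin\TorusZero$. For $m$ even we have $q\equiv1\pmod 4$, so $q+1\equiv2\pmod4$ and $n$ is odd. Then $-1$, which has order $2$, cannot lie in a group of odd order. Consequently $-\TorusZero$ is the nontrivial coset, and $\Torus=\TorusZero\sqcup(-\TorusZero)$.

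\textbf{Constacyclic case.} Now $m$ is odd, so $q\equiv3\pmod4$, $4\mid q+1$, and $n$ is even. Again $\beta$ generates $\TorusZero$, and $-1$ is the unique element of order $2$ in $\TorusZero$, namely $-1=\beta^{n/2}$. Therefore
\[
\TorusZero=\{\beta^k:0\le k<n\}=\mathcal B\sqcup\beta^{n/2}\mathcal B=\mathcal B\sqcup(-\mathcal B).
\]

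For $\theta$, observe that $\theta^4=1$ and $4\mid q+1$, so $\theta\in\Torus$. The element $\theta$ generates the order-$4$ subgroup of $\Torus$, which is $\{t^{(q+1)/4}\}$. It remains to check that $\theta\notin\TorusZero$. If $g$ generates $\Torus$, then $\theta=g^{\pm(q+1)/4}$. The exponent $(q+1)/4$ is odd: this holds exactly when $q\equiv3\pmod8$, which is true for $m$ odd since $3^m\equiv3\pmod8$. Hence $\theta\notin\TorusZero$, so $\theta\TorusZero$ is the other coset of $\TorusZero$ in $\Torus$.

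Putting these together gives
\[
\Torus=\mathcal B\sqcup(-\mathcal B)\sqcup\theta\mathcal B\sqcup(-\theta\mathcal B),
\]
which is the same as $\mathcal P_D\sqcup(-\mathcal P_D)$. The parity check on $(q+1)/4$ is the main point needing care.

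\textbf{Bijection.} In both cases $\Torus=\mathcal P\sqcup(-\mathcal P)$, where $\mathcal P=\{p_0,\ldots,p_{n-1}\}$ has $n$ distinct elements. Since $\F_3^*=\{\pm1\}$, the map $(j,u)\mapsto up_j$ goes from a set of size $2n=q+1$ onto $\mathcal P\cup(-\mathcal P)=\Torus$. A surjection between finite sets of equal size is a bijection.
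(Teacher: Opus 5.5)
Your proof is correct and follows essentially the same route as the paper: you identify $\langle\beta\rangle$ with $\TorusZero$, use the oddness of $n$ to get $-1\notin\TorusZero$ in the cyclic case, and use $\beta^{n/2}=-1$ together with $n\equiv 2\pmod 4$ (equivalently, your observation that $(q+1)/4$ is odd) to get $\theta\notin\TorusZero$ in the constacyclic case, finishing by counting the $2n=q+1$ signed columns. You also supply small justifications the paper leaves implicit, namely why $\langle\beta\rangle=\TorusZero$ and why $\theta\in\Torus$.
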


\begin{proof}
	For $C_m$, the column-label set is $\langle\beta\rangle=\TorusZero$ and has order $(q+1)/2$. Since $m$ is even, $(q+1)/2$ is odd, so $-1\notin\TorusZero$. Hence $\Torus=\TorusZero\sqcup(-\TorusZero)$.
	
	For $D_m$, one has $\beta^{n/2}=-1$, and therefore
	\[
	\TorusZero=\langle\beta\rangle
	=\mathcal B\sqcup(-\mathcal B).
	\]
	Moreover, $n\equiv2\pmod4$, so $\theta\notin\TorusZero$. Hence
	\[
	\Torus=\TorusZero\sqcup\theta\TorusZero
	=\mathcal B\sqcup(-\mathcal B)\sqcup\theta\mathcal B\sqcup(-\theta\mathcal B).
	\]
	The first and third parts of the last disjoint union are precisely the labels of the columns of $P_D$. In either family there are $2n=q+1$ signed columns, so the displayed map is a bijection.
\end{proof}

For either family, let $P$ denote the relevant ternary parity-check matrix and let $p$ denote its column-label vector. In the $K$-valued model, the coset weight associated with $S$ is
\begin{equation}\label{eq:syndrome-weight}
	\omega_P(S)
	=\min\{\wt(\boldsymbol e):\boldsymbol e\in\F_3^n,
	\ \sigma_p(\boldsymbol e)=S\}.
\end{equation}
Because the kernel of $\sigma_p$ is the code, $\omega_P(S)$ is exactly the weight of the coset corresponding to $S$.

\begin{definition}\label{def:additive-length}
	For $S\in K$, define
	\[
	\ellT(S)=\min\{r\geq0:S=t_1+\cdots+t_r,\ t_i\in\Torus\},
	\]
	where the empty sum is $0$.
\end{definition}

\begin{prop}\label{prop:coset-length}
	For either Gashkov--Sidel'nikov family and every syndrome $S\in K$,
	\[
	\omega_P(S)=\ellT(S).
	\]
\end{prop}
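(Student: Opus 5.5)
We prove the two inequalities $\omega_P(S)\le\ellT(S)$ and $\ellT(S)\le\omega_P(S)$ separately. Both rest on Proposition~\ref{prop:signed-columns}: every nonzero ternary multiple $up_j$ of a column label lies in $\Torus$, and every element of $\Torus$ arises from exactly one signed position $(j,u)$.

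For $\ellT(S)\le\omega_P(S)$, take any $\boldsymbol e\in\F_3^n$ with $\sigma_p(\boldsymbol e)=S$ and $\wt(\boldsymbol e)=w$. Discard the zero coordinates and write
\[
S=\sum_{j:\,e_j\neq0}e_jp_j .
\]
Each summand $e_jp_j$ with $e_j\in\F_3^*=\{\pm1\}$ equals $\iota_P(j,e_j)\in\Torus$. Hence $S$ is a sum of $w$ elements of $\Torus$, so $\ellT(S)\le w$. Minimizing over $\boldsymbol e$ gives $\ellT(S)\le\omega_P(S)$. The empty case $S=0$ with $\boldsymbol e=0$ is consistent with this.

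For the reverse inequality, let $r=\ellT(S)$ and write $S=t_1+\cdots+t_r$ with $t_i\in\Torus$. By the bijection \eqref{eq:signed-column-map}, write $t_i=u_ip_{j_i}$. The natural candidate is $\boldsymbol e=\sum_i u_i\boldsymbol\epsilon_{j_i}$, where $\boldsymbol\epsilon_j$ denotes the $j$th standard basis vector. Then $\sigma_p(\boldsymbol e)=S$, and $\wt(\boldsymbol e)\le r$ because each of the $r$ terms touches at most one coordinate.

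The inequality $\wt(\boldsymbol e)\le r$ already suffices, even when some positions $j_i$ coincide and merged coordinates partially cancel. It yields $\omega_P(S)\le\wt(\boldsymbol e)\le r=\ellT(S)$. By the first part, this forces equality.

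Minimality of $r$ also shows that the positions $j_i$ must in fact be distinct. If two of them coincided, either $t_i+t_k=0$ (opposite signs) or $t_i+t_k=2t_i=-t_i\in\Torus$ (equal signs). In either case the number of torus summands drops, contradicting $r=\ellT(S)$. We do not need this refinement for the proposition, but it will be used to read off coset leaders directly from a shortest decomposition.

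The main point needing care is that the characteristic is $3$. Repeated labels then combine as $t+t=-t$ rather than producing a new element, and the argument above relies on this to keep weights bounded by the number of summands. Everything else follows directly from Proposition~\ref{prop:signed-columns} and the definition \eqref{eq:syndrome-weight}.
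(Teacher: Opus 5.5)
Your proof is correct and follows the paper's argument through the signed-column bijection of Proposition~\ref{prop:signed-columns}. The one difference is in the inequality $\omega_P(S)\le\ellT(S)$: you use the support bound $\wt(\boldsymbol e)\le r$ directly, whereas the paper first shows that a minimal decomposition has no equal or opposite summands, so that the weight is exactly $r$; your version is marginally leaner. Your closing remark misplaces the role of characteristic $3$, however. The bound $\wt(\boldsymbol e)\le r$ is a pure support count and needs no such fact, and the identity $t+t=-t$ is used only in the distinctness refinement.
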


\begin{proof}
	If $\boldsymbol e$ has weight $r$, then \eqref{eq:K-valued-syndrome} expresses its syndrome as a sum of $r$ signed column labels. Proposition~\ref{prop:signed-columns} therefore gives a representation of $S$ by $r$ elements of $\Torus$, so $\ellT(S)\leq\omega_P(S)$.
	
	Conversely, consider a representation of $S$ of minimum length. If two summands are opposite, they cancel. If two summands are equal, then in characteristic $3$ their sum is $2t=-t\in\Torus$, so the two terms can be replaced by one term. Hence a minimum representation contains neither equal nor opposite summands. Under the bijection \eqref{eq:signed-column-map}, its summands correspond to distinct coordinate positions and define an error vector of the same weight. Thus $\omega_P(S)\leq\ellT(S)$.
\end{proof}

\begin{remark}\label{rem:inverse-signed-columns}
	For $t\in\Torus$, write
	\[
	\iota_P^{-1}(t)=\bigl(j_P(t),u_P(t)\bigr),
	\]
	so that
	\[
	t=u_P(t)p_{j_P(t)},
	\quad
	j_P(t)\in\{0,\ldots,n-1\},
	\quad
	u_P(t)\in\F_3^*.
	\]
	For the cyclic family,
	\[
	u_{P_C}(t)=
	\begin{cases}
		1, & t\in\TorusZero,\\
		-1, & t\in-\TorusZero,
	\end{cases}
	\qquad
	\beta_C^*(t)=u_{P_C}(t)t\in\TorusZero,
	\]
	and $j_{P_C}(t)$ is the unique index satisfying
	\[
	\beta_C^*(t)=\beta^{j_{P_C}(t)}.
	\]
	For the constacyclic family,
	\[
	u_{P_D}(t)=
	\begin{cases}
		1, & t\in\mathcal P_D,\\
		-1, & t\in-\mathcal P_D,
	\end{cases}
	\qquad
	\beta_D^*(t)=u_{P_D}(t)t\in\mathcal P_D,
	\]
	and $j_{P_D}(t)$ is the unique index satisfying
	\[
	\beta_D^*(t)=
	\begin{cases}
		\beta^{j_{P_D}(t)}, & 0\le j_{P_D}(t)<n/2,\\
		\theta\beta^{j_{P_D}(t)-n/2}, & n/2\le j_{P_D}(t)<n.
	\end{cases}
	\]
	Thus $j_P(t)$ is the recovered coordinate position and $u_P(t)$ is the corresponding nonzero ternary error value. These inverse maps may be stored as precomputed column-label dictionaries. They will be used explicitly in Step~5 of each branch of Algorithms~\ref{alg:cyclic} and \ref{alg:constacyclic}.
\end{remark}

Let $\chi$ denote the quadratic character of $\F_q$, extended by $\chi(0)=0$. Specifically, the quadratic character $\chi$ on $\F_q$ is given by
\be
\begin{array}{rcl}
	\chi:\F_q & \ra & \{0,1,-1\} \\
	x & \mapsto & \left\{
	\begin{array}{ll}
		0, & \mbox{if $x=0$}, \\
		1, & \mbox{if $x \in \F_q^*$ is a square}, \\
		-1, & \mbox{if $x \in \F_q^*$ is a nonsquare}.
	\end{array}
	\right.
\end{array}
\nn\ee
The following lemma gives one- and two-term decompositions.

\begin{lem}\label{lem:short-decomposition}
	For any $S\in K$, the following statements hold.
	\begin{enumerate}
		\item[(1)] $\ellT(S)=0$ if and only if $S=0$.

		\item[(2)] $\ellT(S)=1$ if and only if $\Norm(S)=1$.
		
		\item[(3)] Suppose that $S\neq0$ and $\Norm(S)\neq1$. Then
		\[
		\ellT(S)=2
		\quad\Longleftrightarrow\quad
		\chi\!\left(1-\Norm(S)^{-1}\right)=-1.
		\]
		In this case, the two summands are precisely the two distinct roots in
		$\Torus$ of
		\[
		f_S(X)=X^2-SX+\frac{S}{\overline S}.
		\]
		More explicitly, if
		\[
		\delta^2=1-\Norm(S)^{-1},
		\]
		then these roots are
		\[
		t_+=-S+S\delta
		\quad\text{and}\quad
		t_-=-S-S\delta.
		\]
	\end{enumerate}
\end{lem}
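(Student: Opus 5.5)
We prove parts (1) and (2) directly, and reduce part (3) to a quadratic equation over $K$ by applying conjugation to a two-term decomposition.

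Parts (1) and (2) are immediate from the definition. The empty sum is the only length-$0$ representation, so $\ellT(S)=0$ iff $S=0$. Also $\ellT(S)=1$ iff $S\in\Torus$, i.e.\ iff $S\ne0$ and $\Norm(S)=1$. Since $\Norm(0)=0\neq1$, this is simply $\Norm(S)=1$.

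For (3), assume $S\neq0$ and $\Norm(S)\neq1$. By (1) and (2), $\ellT(S)\ge2$, so $\ellT(S)=2$ iff $S=t_1+t_2$ for some $t_1,t_2\in\Torus$.

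\emph{Necessity.} Suppose $S=t_1+t_2$ with $t_i\in\Torus$. Since $\overline{t_i}=t_i^{-1}$ for $t_i\in\Torus$,
\[
\overline S=t_1^{-1}+t_2^{-1}=\frac{S}{t_1t_2},
\qquad\text{so}\qquad t_1t_2=\frac{S}{\overline S}.
\]
Hence $t_1,t_2$ are the two roots of $f_S$.

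\emph{The roots of $f_S$.} The discriminant is $S^2-4S/\overline S$. In characteristic $3$ we have $4=1$, so it equals
\[
S^2-\frac{S}{\overline S}=S^2\bigl(1-\Norm(S)^{-1}\bigr)=S^2c,
\qquad c:=1-\Norm(S)^{-1}\in\F_q^*.
\]
Here $c\neq0$ because $\Norm(S)\neq1$. Also $1/2=-1$ in $\F_3$. So for $\delta$ with $\delta^2=c$ (which exists in $K$, since every element of $\F_q$ is a square in $K$), the roots are
\[
-(S\pm S\delta)=-S\mp S\delta .
\]
These are the stated $t_\pm$. They are distinct because $\delta\ne0$.

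\emph{Which roots lie in $\Torus$.} Split on whether $c$ is a square in $\F_q$.
\begin{itemize}
\item If $\chi(c)=-1$, then $\delta\in K\setminus\F_q$. Since $\overline\delta^2=c$ and $\overline\delta\neq\delta$, we get $\overline\delta=-\delta$. Therefore
\[
\Norm(t_\pm)=\Norm(S)(1\pm\delta)(1\mp\delta)=\Norm(S)(1-c)=\Norm(S)\cdot\Norm(S)^{-1}=1 .
\]
So both roots lie in $\Torus$. Their sum is $S$ by Vieta, giving $\ellT(S)=2$ with exactly these summands.
\item If $\chi(c)=1$, then $\delta\in\F_q^*$, and $\Norm(t_\pm)=\Norm(S)(1\pm\delta)^2$. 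Suppose a decomposition $S=t_1+t_2$ existed. By necessity, $\{t_1,t_2\}=\{t_+,t_-\}$, so both $\Norm(t_+)$ and $\Norm(t_-)$ would equal $1$. This forces $(1+\delta)^2=(1-\delta)^2$, i.e.\ $4\delta=\delta=0$, a contradiction. Hence $\ellT(S)\ge3$.
\end{itemize}

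Uniqueness up to order follows because any decomposition consists of the two roots of $f_S$. The only delicate step is the conjugation identity $\overline\delta=-\delta$ in the nonsquare case. Everything else is routine characteristic-$3$ arithmetic.
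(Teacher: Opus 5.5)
Your proof is correct and follows essentially the paper's route: you conjugate $S=t_1+t_2$ to get $t_1t_2=S/\overline S$, identify the summands with the roots $-S\pm S\delta$ of $f_S$, and use $\overline\delta=-\delta$ in the nonsquare case to obtain $\Norm(t_\pm)=\Norm(S)(1-\delta^2)=1$. The only difference is in the necessity direction. The paper shows directly that $\delta=(t_1-t_2)/S$ satisfies $\overline\delta=-\delta$, so $\delta^2=1-\Norm(S)^{-1}$ is a nonsquare; you instead argue by contradiction that when $c$ is a square, the two roots have norms $\Norm(S)(1\pm\delta)^2$, which cannot both equal $1$ since $\delta\neq0$.
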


\begin{proof}
	The assertion in part~(1) is immediate. For $S\neq0$, one has
	$\ellT(S)=1$ precisely when $S\in\Torus$, which is equivalent to
	$\Norm(S)=1$. This proves part~(2).
	
	We now prove part~(3). Suppose first that $\ellT(S)=2$. Then there exist
	$t_1,t_2\in\Torus$ such that
	\[
	S=t_1+t_2.
	\]
	Since $\Norm(S)\neq1$, the two summands must be distinct. Indeed, if
	$t_1=t_2$, then
	\[
	S=2t_1=-t_1\in\Torus,
	\]
	contrary to $\Norm(S)\neq1$.
	Since $\overline{t_i}=t_i^{-1}$, we have
	\[
	\overline S
	=t_1^{-1}+t_2^{-1}
	=\frac{t_1+t_2}{t_1t_2}
	=\frac{S}{t_1t_2}.
	\]
	Consequently,
	\[
	t_1t_2=\frac{S}{\overline S},
	\]
	and hence $t_1$ and $t_2$ are the two distinct roots of
	\[
	f_S(X)=X^2-SX+\frac{S}{\overline S}.
	\]
	
	Set $\delta=\frac{t_1-t_2}{S}.$
	Using $\overline{t_i}=t_i^{-1}$ and $t_1t_2\overline S=S$, we obtain
	\[
	\overline\delta
	=\frac{t_1^{-1}-t_2^{-1}}{\overline S}
	=-\frac{t_1-t_2}{t_1t_2\overline S}
	=-\frac{t_1-t_2}{S}
	=-\delta.
	\]
	Since $t_1\neq t_2$, we have $\delta\neq0$. As the characteristic is odd,
	the relation $\overline\delta=-\delta$ implies that
	$\delta\notin\F_q$.
	Moreover, the discriminant identity	gives
\[
\delta^2
=\frac{(t_1-t_2)^2}{S^2}
=1-\frac{t_1t_2}{S^2}
=1-\Norm(S)^{-1}.
\]
Thus $\delta^2\in\F_q^*$ but $\delta\notin\F_q$. Therefore
	$\delta^2$ is a nonsquare in $\F_q^*$, and hence
	\[
	\chi\left(1-\Norm(S)^{-1}\right)=-1.
	\]
	
	Conversely, suppose that $S\neq0$, $\Norm(S)\neq1$, and
$\chi\left(1-\Norm(S)^{-1}\right)=-1.$
	Put 
	\[d=1-\Norm(S)^{-1}\in\F_q^*.\]
	Since $d$ is a nonsquare in $\F_q^*$, there exists
	$\delta\in K\setminus\F_q$ such that
	$\delta^2=d.$
	Because $\delta^2\in\F_q$, we have
	$\overline\delta^2=\delta^2.$
	As $\delta\notin\F_q$, it follows that
	$\overline\delta=-\delta.$ Define
	\[
	t_+=-S+S\delta
	\quad\text{and}\quad
	t_-=-S-S\delta.
	\]
	Since the characteristic is $3$, we have
	$t_++t_-=-2S=S.$
	Furthermore,
\[
t_+t_-=S^2(1-\delta^2)
=\frac{S^2}{\Norm(S)}
=\frac{S}{\overline S}.
\]
Thus $t_+$ and $t_-$ are the two roots of $f_S(X)$. They are distinct
	because $S\neq0$ and $\delta\neq0$.
	It remains to verify that these roots belong to $\Torus$. Using
	$\overline\delta=-\delta$, we obtain
\begin{align*}
	\Norm(t_\pm) &= \Norm(S) (-1 \pm \delta)(-1\pm \overline{\delta})   \\
	&= \Norm(S)(1 \mp (\delta + \overline{\delta}) + \overline{\delta} \delta)\\
	&= \Norm(S)(1 + \overline{\delta} \delta)  \\
	&= \Norm(S)(1 - \delta^2 )                       \\
	&=  1,
\end{align*}
and hence $t_+,t_-\in\Torus$. Hence $f_S$ has two distinct roots in $\Torus$.
	Therefore $\ellT(S)\leq2$. Since $S\neq0$ and $\Norm(S)\neq1$, parts~(1)
	and~(2) exclude lengths zero and one. Consequently,
	\[
	\ellT(S)=2.
	\]
\end{proof}

\begin{lem}\label{lem:sumsets}
	One has
	\[
	|\Torus+\Torus|=\frac{q^2+2q+3}{2}
	\quad\text{and}\quad
	\Torus+\Torus+\Torus=K.
	\]
\end{lem}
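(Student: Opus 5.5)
Both sumset statements will be proved by counting with Lemma~\ref{lem:short-decomposition}; the second also needs an explicit choice of first summand.

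\textbf{Size of $\Torus+\Torus$.} This set consists of $0$ (as $t+(-t)$), of $\Torus$ itself (as $t+t=-t$), and of the elements of length exactly two. By Lemma~\ref{lem:short-decomposition}(3), the last group is the set of $S$ with $S\neq0$, $\Norm(S)=a\neq1$ and $\chi(1-a^{-1})=-1$.

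For each $a\in\F_q^*$, the fiber $\{S\in K:\Norm(S)=a\}$ has exactly $q+1$ elements. So it suffices to count the $a\in\F_q^*\setminus\{1\}$ with $1-a^{-1}$ a nonsquare. The map $a\mapsto 1-a^{-1}$ is a bijection from $\F_q^*\setminus\{1\}$ onto $\F_q\setminus\{0,1\}$. Since $1$ is a square, exactly $(q-1)/2$ of these values are nonsquares. Hence the number of length-two elements is $(q+1)(q-1)/2$, and
\[
|\Torus+\Torus|=1+(q+1)+\frac{q^2-1}{2}=\frac{q^2+2q+3}{2}.
\]

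\textbf{The identity $\Torus+\Torus+\Torus=K$.} Fix $S$ with $\ellT(S)\ge3$, so $S\neq0$, $\Norm(S)\neq1$, and $S\notin\Torus+\Torus$. It is enough to find $\beta\in\Torus$ such that $S-\beta\in\Torus+\Torus$. I will consider the map
\[
\beta\longmapsto\Norm(S-\beta)=\Norm(S)+1-\mathrm{Tr}(S\overline\beta),\qquad \beta\in\Torus .
\]

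\emph{Fibers have size at most two.} Fix a value $c$ of this map. Then $\mathrm{Tr}(S\overline\beta)=S\beta^{-1}+\overline S\beta$ is fixed, and multiplying by $\beta$ gives a quadratic equation $\overline S\beta^2-k\beta+S=0$ with $\overline S\neq0$. So each fiber has at most two elements, and the image has at least $(q+1)/2$ distinct values in $\F_q$.

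\emph{Choosing $\beta$.} The ``good'' values of the norm, those $a$ for which an element of norm $a$ has length at most two, are:
\begin{itemize}
\item $a=0$ and $a=1$;
\item the $(q-1)/2$ values $a\neq0,1$ with $\chi(1-a^{-1})=-1$.
\end{itemize}
That gives $(q+3)/2$ good values, so only $(q-3)/2$ values of $\F_q$ are bad. The image has at least $(q+1)/2>(q-3)/2$ values, so some $\beta$ has $\Norm(S-\beta)$ good. Lemma~\ref{lem:short-decomposition} then writes $S-\beta$ as a sum of at most two elements of $\Torus$. Therefore $\ellT(S)\le3$, and combined with the assumption, $\ellT(S)=3$.

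Since $\ellT(S)$ depends only on $\Norm(S)$, a good norm value automatically makes the element good. I expect no real obstacle; the only delicate point is the fiber bound, which needs $S\neq0$. The constant term in the quadratic must also be checked to be nonzero, so that the equation does not degenerate.
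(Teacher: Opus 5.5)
Your proposal is correct and follows essentially the same route as the paper: it uses the same count over norm fibers for $|\Torus+\Torus|$, and the same map $\beta\mapsto\Norm(S-\beta)$ on $\Torus$ with fibers of size at most two. It then applies a pigeonhole argument against the $(q-1)/2$ norms of length-two elements. The only cosmetic difference is that you count the bad norm values in all of $\F_q$, treating $0$ and $1$ as good, whereas the paper uses $S\notin\Torus$ to restrict the codomain to $\F_q^*$; both lead to the same inequality.
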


\begin{proof}
By Lemma~\ref{lem:short-decomposition}, an element $S$ has exact additive length two precisely when its nonzero norm $a=\Norm(S)$ belongs to
\[
A=\{a\in\F_q^*\setminus\{1\}:\chi(1-a^{-1})=-1\}.
\]
The map $a\mapsto1-a^{-1}$ is a bijection from $A$ to the set of nonsquares in $\F_q^*$, so $|A|=(q-1)/2$. Every nonzero norm fiber in $K$ has size $q+1$. Hence the number of elements of exact length two is
\[
\frac{q-1}{2}(q+1)=\frac{q^2-1}{2}.
\]
Together with $0$ and the $q+1$ elements of $\Torus$, this gives
\[
|\Torus+\Torus|
=1+(q+1)+\frac{q^2-1}{2}
=\frac{q^2+2q+3}{2}.
\]

Now let $S\notin\Torus+\Torus$. Since $t=(-t)+(-t)$ for every $t\in\Torus$ in characteristic $3$, one has $\Torus\subseteq\Torus+\Torus$, and therefore $S\notin\Torus$. Define
\[
\varphi_S:\Torus\longrightarrow\F_q^*,
\quad
\varphi_S(\beta)=\Norm(S-\beta).
\]
The codomain is $\F_q^*$ because $S\notin\Torus$. For $a\in\F_q^*$, the equation $\varphi_S(\beta)=a$ is equivalent to
\[
\overline S\,\beta^2-(\Norm(S)+1-a)\beta+S=0.
\]
Thus every fiber has size at most two, and
\[
|{\rm Im}(\varphi_S)|\geq\frac{q+1}{2}.
\]
Let
\[
G=\{a\in\F_q^*: \chi(1-a^{-1})=-1\}.
\]
Then $|G|=(q-1)/2$. Both $G$ and ${\rm Im}(\varphi_S)$ lie in the $(q-1)$-element set $\F_q^*$, while their cardinalities sum to at least $q$. Hence they intersect. Choose $\beta_1\in\Torus$ such that $\Norm(S-\beta_1)\in G$, and put $B=S-\beta_1$. By the proof of Lemma~\ref{lem:short-decomposition}, the polynomial
\[
X^2-BX+\frac{B}{\overline B}
\]
has two distinct roots $\beta_2,\beta_3\in\Torus$. Since $\beta_2+\beta_3=B$, we obtain
\[
S=\beta_1+\beta_2+\beta_3.
\]
Thus every element of $K$ is a sum of three elements of $\Torus$.
\end{proof}

Next, we present a proof of Theorem~\ref{thm:main}.

\begin{proof}[Proof of Theorem~\ref{thm:main}]
	Lemma~\ref{lem:short-decomposition} proves the assertions for additive lengths zero, one, and two, including the quadratic equation for the two summands. Lemma~\ref{lem:sumsets} gives both
	\[
	|\Torus+\Torus|=\frac{q^2+2q+3}{2}
	\qquad\text{and}\qquad
	\Torus+\Torus+\Torus=K.
	\]
	Hence every element not covered by the first three cases has additive length exactly three.
	
	There is one element of length zero and $q+1$ elements of length one. In the proof of Lemma~\ref{lem:sumsets}, the elements of exact length two were counted as
	\[
	\frac{q-1}{2}(q+1)=\frac{q^2-1}{2}.
	\]
	The number of remaining elements is therefore
	\[
	q^2-1-(q+1)-\frac{q^2-1}{2}
	=\frac{(q-3)(q+1)}{2},
	\]
	which proves the final assertion.
\end{proof}

Next, we present a proof of Corollary~\ref{cor:main-codes}.

\begin{proof}[Proof of Corollary~\ref{cor:main-codes}]
	Both $C_m$ and $D_m$ have codimension $2m$, and hence each has $q^2$ syndromes. Proposition~\ref{prop:signed-columns} identifies the signed parity-check columns of either family with $\Torus$, while Proposition~\ref{prop:coset-length} identifies the minimum coset weight with $\ellT(S)$. The coset-weight distribution is therefore exactly the additive-length distribution in Theorem~\ref{thm:main}.
	
	In the ranges $m\geq2$ even and $m\geq3$ odd, one has $q>3$, so
	\[
	\frac{(q-3)(q+1)}{2}>0.
	\]
	Thus weight-three cosets exist, while every coset has weight at most three. The covering radius is therefore exactly three. Since both code families have minimum distance five, their packing radius is two, and they are quasi-perfect. This gives an alternative derivation of the known covering radius from the additive structure of $\Torus$.
\end{proof}

\subsection{Coset-leader recovery}
The preceding results determine the coset weight associated with every syndrome. We now turn to the search problem and convert a minimum-length torus decomposition into an error vector.

\begin{example}
	Assume that $m=2$ and $n=(3^2+1)/2=5$. 
	Let
	\[
	\F_{3^4}=\F_3[\omega]/(\omega^4+2\omega^3+2),
	\]
	where $\omega$ is primitive.
Let $\beta=\omega^{16}$ have order $5$, and let $C_2$ have column labels
	\[
	(1,\beta,\beta^2,\beta^3,\beta^4).
	\]
	Let $\boldsymbol c=(1,1,1,1,1)$ be a codeword of $C_2$.
	\begin{itemize}
		\item For $\boldsymbol r=(1,0,1,1,1)$, the syndrome is $S=\omega^{56}=-\beta$. Since $\Norm(S)=1$, the associated coset has weight one. Remark~\ref{rem:inverse-signed-columns} gives coordinate $1$ (the second coordinate) and error value $-1$ under zero-based indexing, so
		\[
		\boldsymbol e=(0,-1,0,0,0),
		\quad
		\boldsymbol r-\boldsymbol e=\boldsymbol c.
		\]
		\item For $\boldsymbol r=(1,0,2,1,1)$, the syndrome is $S=\omega^{39}$. One computes
		\[
		1-\Norm(S)^{-1}=\omega^{30},
		\]
		which is a nonsquare in $\F_{3^2}^*$. Choose $\delta=\omega^{15}$, so that $\delta^2=1-\Norm(S)^{-1}$. The two roots of $f_S$ are
		\[
		t_+=-S+S\delta=\omega^{56}=-\beta,
		\quad
		t_-=-S-S\delta=\omega^{32}=\beta^2.
		\]
		They correspond to coordinates $1$ and $2$ (the second and third coordinates) with error values $-1$ and $1$, respectively. Thus
		\[
		\boldsymbol e=(0,-1,1,0,0),
		\quad
		\boldsymbol r-\boldsymbol e=\boldsymbol c.
		\]
	\end{itemize}
\end{example}

Finally, we give a direct algebraic decomposition procedure.

\begin{prop}\label{prop:direct-decoder}
Let $P$ be the parity-check matrix of either Gashkov--Sidel'nikov family, let $p$ be its column-label vector, and let $S\in K$. A coset leader $\boldsymbol e_S$ can be obtained by the following sequential procedure.
	\begin{enumerate}
		\item [(1)] If $S=0$, take the empty decomposition.
		\item [(2)] If $S\neq0$ and $\Norm(S)=1$, take $t_1=S$.
		\item [(3)] If $S\neq0$, $\Norm(S)\neq1$, and $\chi(1-\Norm(S)^{-1})=-1$, take the two roots $t_1,t_2$ of $f_S$.
		\item [(4)] Otherwise, search over $\beta\in\Torus$ until $B=S-\beta\neq0$ and
		\[
		\chi\left(1-\Norm(B)^{-1}\right)=-1.
		\]
		Such a $\beta$ exists by Lemma~\ref{lem:sumsets}. Take $t_1=\beta$ and let $t_2,t_3$ be the two roots of $f_B$.
	\end{enumerate}
	For each summand, compute
	\[
	\bigl(j_i,u_i\bigr)=\iota_P^{-1}(t_i)
	\]
	as in Remark~\ref{rem:inverse-signed-columns}, and define $\boldsymbol e_S$ by $(\boldsymbol e_S)_{j_i}=u_i$ and zero entries elsewhere. Then
	\[
	\sigma_p(\boldsymbol e_S)=S,
	\quad
	\wt(\boldsymbol e_S)=\omega_P(S).
	\]
\end{prop}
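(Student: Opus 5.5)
The plan is to check two things: the procedure outputs a representation $S=t_1+\cdots+t_r$ with $t_i\in\Torus$ and $r=\ellT(S)$, and the resulting vector $\boldsymbol e_S$ is well defined with weight exactly $r$. The equality $\wt(\boldsymbol e_S)=\omega_P(S)$ then follows from Proposition~\ref{prop:coset-length}.

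First I will treat correctness of the decomposition case by case, in the order the procedure runs. Steps (1) and (2) are immediate from Lemma~\ref{lem:short-decomposition}(1),(2). Step (3) is exactly Lemma~\ref{lem:short-decomposition}(3): the two roots of $f_S$ lie in $\Torus$, are distinct, and sum to $S$, so $r=2=\ellT(S)$. For step (4), reaching this step means $S\neq0$, $\Norm(S)\neq1$ and $\chi(1-\Norm(S)^{-1})\neq-1$, so $\ellT(S)=3$ by Theorem~\ref{thm:main}(1). The search terminates by the argument in the proof of Lemma~\ref{lem:sumsets}. That argument produces $\beta_1$ with $\Norm(S-\beta_1)\in G$, and $G$ excludes both $0$ and $1$, since $\chi(0)=0$ and $\chi(1)=1$ are not $-1$. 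Applying Lemma~\ref{lem:short-decomposition}(3) to $B=S-\beta$ then gives distinct $t_2,t_3\in\Torus$ with $t_2+t_3=B$, hence $t_1+t_2+t_3=S$.

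Second, I will show that the summands map to distinct coordinates. By the argument of Proposition~\ref{prop:coset-length}, a decomposition of minimal length $r=\ellT(S)$ contains no two equal summands and no two opposite summands. If it did, replacing $t+t=-t$ or cancelling $t+(-t)$ would give a shorter representation. Under the bijection $\iota_P$ of Proposition~\ref{prop:signed-columns}, we have $\iota_P^{-1}(t)$ and $\iota_P^{-1}(t')$ with the same index $j$ only when $t'=\pm t$. Hence the indices $j_1,\dots,j_r$ are pairwise distinct, so the assignment $(\boldsymbol e_S)_{j_i}=u_i$ involves no overlaps. We get $\sigma_p(\boldsymbol e_S)=\sum u_ip_{j_i}=\sum t_i=S$ and $\wt(\boldsymbol e_S)=r$.

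The only step needing care is step (4). There the decomposition was found by search rather than directly, so its minimality must be obtained from $\ellT(S)=3$. A direct check would look for collisions among $\beta,t_2,t_3$: we have $t_2\neq\pm t_3$ because $t_2\neq t_3$ and $t_2+t_3=B\neq0$. If instead $\beta=\pm t_2$, then $S$ would have length at most $2$. The minimality argument covers this uniformly. Finally, $r=\ellT(S)=\omega_P(S)$ by Proposition~\ref{prop:coset-length}, which completes the proof.
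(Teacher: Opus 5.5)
Your proposal is correct and follows essentially the same route as the paper. The procedure yields a decomposition of length $\ellT(S)$, by Lemma~\ref{lem:short-decomposition}, Theorem~\ref{thm:main}, and the search argument from Lemma~\ref{lem:sumsets}; minimality then excludes equal or opposite summands as in Proposition~\ref{prop:coset-length}, so the indices are distinct and $\wt(\boldsymbol e_S)=\ellT(S)=\omega_P(S)$. Your case-by-case verification, including the remark that $G$ excludes the norms $0$ and $1$ so that step (4) always produces two distinct roots in $\Torus$, just spells out details the paper's short proof leaves implicit.
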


\begin{proof}
	Theorem~\ref{thm:main} and Lemma~\ref{lem:sumsets} guarantee that the procedure terminates and returns a minimum-length decomposition. By Proposition~\ref{prop:coset-length}, no two summands in that decomposition are equal or opposite. Hence their images under the inverse signed-column map have distinct coordinate indices. The definition of $\iota_P$ gives
	\[
	\sigma_p(\boldsymbol e_S)
	=\sum_i u_i p_{j_i}
	=\sum_i t_i=S,
	\]
	and the weight of $\boldsymbol e_S$ is the number of summands, namely $\omega_P(S)$.
\end{proof}

\begin{remark}
After the syndrome has been computed, the length-three part of Proposition~\ref{prop:direct-decoder} tests at most $q+1$ elements of $\Torus$. Algorithms~\ref{alg:cyclic} and \ref{alg:constacyclic} instead use structured searches over $\F_q$ in conic coordinates. No asymptotic improvement over an $O(q)$ search is claimed; the purpose of the latter algorithms is to give explicit finite-field realizations of the three-term decomposition and of the recovery of coordinate positions and error values.
\end{remark}

For the parameter-selection arguments, we use the standard quadratic-character estimate: if $f\in\F_q[X]$ is squarefree of degree $d\geq1$ and is not a constant multiple of a square over an algebraic closure, then
\[
\left|\sum_{x\in\F_q}\chi(f(x))\right|
\leq(d-1)\sqrt q;
\]
see, for example, \cite[Theorem~5.41]{finite}. We also use the fact that a separable quadratic polynomial has a complete quadratic-character sum of absolute value one.

\section{Cyclic three-term decompositions} \label{sec3}
Assume throughout this section that $m\geq2$ is even and put $q=3^m$. The cyclic column-label set is $\TorusZero$, and
\[
\Torus=\TorusZero\sqcup(-\TorusZero).
\]
The map $x\mapsto x^{q-1}$ from $K^*$ onto $\Torus$ is surjective, so we may choose once and for all an element $w_2\in K^*$ satisfying $w_2^{q-1}=-1$. Put
\[
D_2=w_2^{q+1}.
\]
Throughout Sections~\ref{sec3} and \ref{sec4}, division by $2$ is taken in $\F_q$; since the characteristic is $3$, one has $2^{-1}=2$.
Then $w_2^q=-w_2$, $D_2=-w_2^2\in\F_q^*$, and
\[
D_2^{(q-1)/2}
=\bigl(w_2^{q-1}\bigr)^{(q+1)/2}=-1,
\]
because $(q+1)/2$ is odd. Thus $D_2$ is a nonsquare in $\F_q^*$, and
\begin{equation}\label{eq:cyclic-norm-conic}
\Norm(x+w_2y)=(x+w_2y)(x-w_2y)=x^2+D_2y^2
\end{equation}
	for all $x,y\in\F_q$.
	
\begin{algorithm}\label{alg:cyclic}
	\textnormal{Input:} a syndrome $S\in K^*$ with $\ellT(S)=3$.
	
	\textnormal{Output:} an error vector $\boldsymbol e_S\in\F_3^n$ satisfying
	\[
	\sigma_{p_C}(\boldsymbol e_S)=S,
	\quad
	\wt(\boldsymbol e_S)=3.
	\]
	\begin{enumerate}
		\item Set $z=S^{q-1}\in\Torus$.
		\item If $z^{-1}\in\TorusZero$, apply the first cyclic normalization in Subsection~\ref{subsec:cyclic-a}. Otherwise $-z^{-1}\in\TorusZero$, and apply the second cyclic normalization in Subsection~\ref{subsec:cyclic-b}.
		\item In either branch, the construction produces $\beta_1,\beta_2,\beta_3\in\Torus$ with $S=\beta_1+\beta_2+\beta_3$. Compute
		\[
		(j_i,u_i)=\iota_{P_C}^{-1}(\beta_i),
		\quad 1\leq i\leq3,
		\]
		and return the vector $\boldsymbol e_S$ whose entries at $j_1,j_2,j_3$ are $u_1,u_2,u_3$, respectively, and whose remaining entries are zero.
	\end{enumerate}
\end{algorithm}

\subsection{The first cyclic normalization}\label{subsec:cyclic-a}

Retain $z=S^{q-1}$ from Algorithm~\ref{alg:cyclic}. Suppose that $z^{-1}\in\TorusZero$. Put $M=|\TorusZero|=(q+1)/2$. Since $M$ is odd, the squaring map is a permutation of $\TorusZero$, and the unique element $h\in\TorusZero$ satisfying
\begin{equation}\label{eq:cyclic-a-normalization}
	h^2=z^{-1}
\end{equation}
is
\[
h=(z^{-1})^{(M+1)/2}.
\]
Define
\begin{equation}\label{eq:cyclic-a-alpha}
	\alpha=\frac{S}{h}.
\end{equation}
Since $h^{q-1}=h^{-2}=z$, one has $\alpha^{q-1}=1$, and hence $\alpha\in\F_q^*$. Moreover, $\alpha\notin\{1,-1\}$ because $\ellT(S)=3$.
For this $\alpha$, define
\begin{align}
	A(x)&=2\alpha x-\alpha^2-1,\nonumber\\
	B(x)&=2\alpha x^2-x+\alpha^3+\alpha,\nonumber\\
	C(x)&=(-\alpha^2-1)x^2+(\alpha^3+\alpha)x-\alpha^4+\alpha^2,\nonumber\\
	\Delta(x)&=B(x)^2-A(x)C(x).
	\label{eq:cyclic-a-polynomials}
\end{align}
Since the characteristic is $3$, $\Delta(x)=B(x)^2-A(x)C(x)$ is the usual discriminant of the quadratic equation used below.

The first cyclic construction is as follows.
\begin{enumerate}
	\item [{\bf Step 1:}] Search over $x_1\in\F_q^*$ in any fixed order until
	\[
	1-x_1^2\text{ is a nonsquare},\quad
	A(x_1)\neq0,\quad
	\Delta(x_1)\text{ is a nonzero square}.
	\]
	Lemma~\ref{lem:cyclic-a-parameter} proves that the search terminates. Choose $y_1\in\F_q^*$ satisfying
	\begin{equation}\label{eq:cyclic-a-y1}
		D_2y_1^2=1-x_1^2.
	\end{equation}
	\item [{\bf Step 2:}] Solve
	\begin{equation}\label{eq:cyclic-a-x2}
		A(x_1)x_2^2+B(x_1)x_2+C(x_1)=0
	\end{equation}
	in $\F_q$ and choose either of its two roots.
	\item [{\bf Step 3:}] Define
	\begin{equation}\label{eq:cyclic-a-y2}
		y_2=\frac{\alpha(x_1+x_2)-x_1x_2-(\alpha^2+1)/2}{D_2y_1}.
	\end{equation}
	\item [{\bf Step 4:}] Put
	\begin{align}
		h_1&=x_1+w_2y_1,\nonumber\\
		h_2&=x_2+w_2y_2,\nonumber\\
		h_3&=(\alpha-x_1-x_2)+w_2(-y_1-y_2).
		\label{eq:cyclic-a-hi}
	\end{align}
Lemma~\ref{lem:cyclic-a-conic} shows that $h_1,h_2,h_3$ are in $\Torus$.
     \item [{\bf Step 5:}] Set
     \begin{equation}\label{eq:cyclic-a-beta}
     	\beta_i=h_i h,
     	\qquad 1\le i\le3.
     \end{equation}
     By Lemma~\ref{lem:cyclic-a-conic}, the elements $\beta_i$ lie in $\Torus$. Apply the inverse signed-column map of Remark~\ref{rem:inverse-signed-columns}. Explicitly, for $1\le i\le3$, define
     \[
     u_i=
     \begin{cases}
     	1, & \beta_i^{(q+1)/2}=1,\\
     	-1, & \beta_i^{(q+1)/2}=-1,
     \end{cases}
     \qquad
     \beta_i^*=u_i\beta_i\in\TorusZero.
     \]
     Let $j_i$ be the unique coordinate index satisfying
     \[
     \beta_i^*=\beta^{j_i},
     \quad
     0\le j_i<n.
     \]
     Thus $\beta_i=u_i\beta^{j_i}$, so $u_i$ is the recovered error value and $j_i$ is the recovered coordinate position. Define $\boldsymbol e_S\in\F_3^n$ by
     \[
     (\boldsymbol e_S)_{j_i}=u_i
     \quad(1\le i\le3),
     \qquad
     (\boldsymbol e_S)_j=0
     \quad\text{for all other }j.
     \]
     Lemma~\ref{lem:cyclic-a-output} proves that the indices $j_1,j_2,j_3$ are pairwise distinct and that $\boldsymbol e_S$ has syndrome $S$ and weight $3$.
\end{enumerate}

\begin{lem}\label{lem:cyclic-a-parameter}
	Let $q=3^m$ with even $m\geq2$, and let $\alpha\in\F_q^*\setminus\{1,-1\}$. For the polynomials in \eqref{eq:cyclic-a-polynomials}, there exists $x_1\in\F_q^*$ such that $1-x_1^2$ is a nonsquare, $A(x_1)\neq0$, and $\Delta(x_1)$ is a nonzero square.
\end{lem}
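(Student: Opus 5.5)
The plan is a character-sum count. Let $\mathcal N$ denote the number of $x\in\F_q$ with $\chi(1-x^2)=-1$ and $\chi(\Delta(x))=1$. For $x$ with $1-x^2\neq0$ and $\Delta(x)\neq0$, the product
\[
\tfrac14\bigl(1-\chi(1-x^2)\bigr)\bigl(1+\chi(\Delta(x))\bigr)
\]
is the indicator of these two conditions. At the finitely many other $x$ (namely $x=\pm1$ and the roots of $\Delta$) it lies in $[0,1]$. Summing over $\F_q$ gives
\[
4\,\mathcal N\ \ge\ q-\sum_x\chi(1-x^2)+\sum_x\chi(\Delta(x))-\sum_x\chi\bigl((1-x^2)\Delta(x)\bigr)-4E,
\]
where $E\le 2+\deg\Delta$ accounts for the degenerate points. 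We then still have to discard $x_1=0$ and the single root $x=(\alpha^2+1)/(2\alpha)$ of $A$, which costs at most $2$ more. So it suffices to show $\mathcal N>2$.

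The first sum is a complete character sum of a separable quadratic, so it has absolute value $1$, as recalled at the end of Section~\ref{sec2}. For the other two sums I will first compute $\Delta$ explicitly in characteristic $3$.
\begin{itemize}
\item The $x^4$ coefficient comes only from $B^2$ and equals $4\alpha^2=\alpha^2\neq0$. Hence $\deg\Delta=4$ and the leading coefficient is a square.
\item I expect, from the geometric meaning of the construction (the conic $x^2+D_2y^2=1$ and the choice of $y_1$ in \eqref{eq:cyclic-a-y1}), that $\Delta$ factors as $(1-x^2)\,Q(x)$ times a constant, with $Q$ a quadratic depending on $\alpha$. I will check this by direct expansion of \eqref{eq:cyclic-a-polynomials}.
\item If the factorization holds, then $\chi((1-x^2)\Delta(x))=\chi(cQ(x))$ for $x\neq\pm1$, which is again a quadratic sum of absolute value at most $1$ once $Q$ is shown separable for $\alpha\neq\pm1$.
\item The sum $\sum\chi(\Delta)$ is then a degree-four sum. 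By the Weil bound quoted after Proposition~\ref{prop:direct-decoder}, its absolute value is at most $3\sqrt q$, provided $\Delta$ is squarefree, i.e.\ $Q$ is coprime to $1-x^2$ and separable.
\end{itemize}
This would yield $4\mathcal N\ge q-3\sqrt q-O(1)$, which exceeds $8$ for all $q\ge 81$.

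If the factorization fails, I fall back on the generic bounds. These are $|\sum\chi(\Delta)|\le3\sqrt q$ and $|\sum\chi((1-x^2)\Delta)|\le5\sqrt q$, valid after verifying that $(1-x^2)\Delta$ is not a constant times a square; for that it suffices that $\Delta$ is squarefree and coprime to $1-x^2$, which I check via resultants in $\alpha$. This gives $4\mathcal N\ge q-8\sqrt q-O(1)$. That bound is positive for $q\ge 3^6$, and $q=9$ and $q=81$ are then handled by direct verification over all admissible $\alpha$.

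The main obstacle is the squarefreeness analysis. For special values of $\alpha$, the quartic $\Delta$ may acquire a repeated root or share a root with $1-x^2$, and then the Weil bound does not apply in the stated form. My first step there is to compute $\operatorname{disc}(\Delta)$ and $\operatorname{Res}(\Delta,1-x^2)$ as polynomials in $\alpha$ and show that they vanish only at $\alpha\in\{0,\pm1\}$, which are excluded. Any remaining exceptional $\alpha$ will be treated separately: in such a case $\Delta$ factors as a square times a polynomial of lower degree, and the character sum collapses to a lower-degree one, which is easier to bound.
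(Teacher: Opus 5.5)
Your route is the paper's. You use the same weighted count $\sum_x(1-\chi(1-x^2))(1+\chi(\Delta(x)))$ and expand it into three character sums, bounded by $1$, by $3\sqrt q$ (Weil), and by an $O(1)$ quadratic sum; the degenerate points and the root of $A$ are discarded at bounded cost.

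The factorization you anticipate does hold. In characteristic $3$,
\[
\Delta(x)=-\alpha(1-x^2)Q_\alpha(x),\qquad Q_\alpha(x)=\alpha x^2+(1-\alpha^2)x+\alpha^3+\alpha,
\]
with $\operatorname{disc}(Q_\alpha)=1$, $Q_\alpha(1)=(\alpha-1)^2(\alpha+1)$ and $Q_\alpha(-1)=(\alpha-1)(\alpha+1)^2$. So $\Delta$ is a squarefree quartic for every $\alpha\notin\{0,\pm1\}$. There are no exceptional $\alpha$, and your fallback branch is never needed.

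Two small corrections:
\begin{itemize}
\item The criterion in your last paragraph is misstated for this situation. $\operatorname{Res}(\Delta,1-x^2)$ vanishes identically, so what must be checked is $Q_\alpha(\pm1)\neq0$ together with separability of $Q_\alpha$.
\item $\sum_x\chi((1-x^2)\Delta(x))$ is bounded by $3$, not $1$. It equals $\chi(-\alpha)$ times the complete sum for $Q_\alpha$ with the two points $x=\pm1$ removed. This only changes your $O(1)$, which totals at most $28$, so $4\mathcal N\ge q-3\sqrt q-28$.
\end{itemize}

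The one real omission is $q=9$. The lemma covers $m=2$, but your main branch ends with ``exceeds $8$ for all $q\ge81$.'' At $q=9$ the bound $q-3\sqrt q-28$ is negative, so the character-sum argument says nothing there. You mention direct verification of $q=9$ only inside the fallback, which is not the branch that actually applies.

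You need a separate check over the six $\alpha\in\F_9^*\setminus\{\pm1\}$; the paper does this with Magma. By hand, write $\F_9=\F_3(i)$ with $i^2=-1$. Then $1-x_1^2$ is a nonsquare exactly for $x_1\in\{\pm1\pm i\}$. Since $\chi(\Delta(x_1))=\chi(\alpha)\chi(1-x_1^2)\chi(Q_\alpha(x_1))$, a short computation shows that each admissible $\alpha$ has exactly two valid $x_1$ among these. None of them is the root of $A$.
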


\begin{proof}
	Let $\chi$ be the quadratic character of $\F_q$, extended by $\chi(0)=0$, and put
	\[
	g(x)=1-x^2,
	\quad
	Q_\alpha(x)=\alpha x^2+(1-\alpha^2)x+\alpha^3+\alpha.
	\]
	A direct expansion in characteristic $3$ yields
	\begin{equation}\label{eq:cyclic-a-factorization}
		\Delta(x)=B(x)^2-A(x)C(x)=-\alpha g(x)Q_\alpha(x).
	\end{equation}
	The polynomial $g(x)$ is separable with roots $1$ and $-1$, while $Q_\alpha(x)$ has the discriminant
	\[
	\operatorname{disc}(Q_\alpha)=1.
	\]
	Moreover,
	\[
	Q_\alpha(1)=(\alpha-1)^2(\alpha+1),
	\quad
	Q_\alpha(-1)=(\alpha-1)(\alpha+1)^2.
	\]
	Since $\alpha\notin\{0,1,-1\}$, the two quadratic factors in \eqref{eq:cyclic-a-factorization} are separable and coprime. Thus $\Delta$ is a squarefree quartic, and $A$ is a nonzero linear polynomial.
	
	Let
	\[
	E_\alpha=\{0\}\cup\{x\in\F_q:A(x)\Delta(x)=0\}.
	\]
	Then $|E_\alpha|\leq6$. Consider
	\[
	N_1=\sum_{x\in\F_q\setminus E_\alpha}
	\bigl(1-\chi(g(x))\bigr)
	\bigl(1+\chi(\Delta(x))\bigr).
	\]
	For $x\notin E_\alpha$, the summand is $4$ exactly when $g(x)$ is a nonsquare and $\Delta(x)$ is a square, and it is zero otherwise. Hence $N_1/4$ is the number of admissible choices of $x_1$.
	
	Let $N$ denote the same sum over all of $\F_q$. Expanding gives
	\[
	N=q-\sum_{x\in\F_q}\chi(g(x))
	+\sum_{x\in\F_q}\chi(\Delta(x))
	-\sum_{x\in\F_q}\chi(g(x)\Delta(x)).
	\]
	Since $g$ is a separable quadratic,
	\[
	\left|\sum_{x\in\F_q}\chi(g(x))\right|=1.
	\]
	Since $\Delta$ is a squarefree quartic, the Weil bound gives
	\[
	\left|\sum_{x\in\F_q}\chi(\Delta(x))\right|
	\leq3\sqrt q.
	\]
	Finally,
	\[
	g(x)\Delta(x)=-\alpha g(x)^2Q_\alpha(x).
	\]
	Away from the two roots of $g$, its quadratic character is $\chi(-\alpha)\chi(Q_\alpha(x))$. The complete character sum of the separable quadratic $Q_\alpha$ has absolute value one, and deleting at most two values changes the sum by at most two. Therefore
	\[
	\left|\sum_{x\in\F_q}\chi(g(x)\Delta(x))\right|\leq3.
	\]
	It follows that $N\geq q-3\sqrt q-4$. Removing the at most six exceptional points decreases the weighted sum by at most $24$, so
	\[
	N_1\geq q-3\sqrt q-28.
	\]
	This is positive for $q\geq81$. Since $m$ is even, the only remaining field is $q=9$. The Magma computation \cite{Magma} shows that every admissible $\alpha$ has at least two valid choices of $x_1$.
\end{proof}

\begin{lem}\label{lem:cyclic-a-conic}
	The elements in \eqref{eq:cyclic-a-hi} satisfy
	\[
	h_1,h_2,h_3\in\Torus
	\quad\text{and}\quad
	h_1+h_2+h_3=\alpha.
	\]
\end{lem}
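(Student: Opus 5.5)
The plan is to work entirely in the conic coordinates given by \eqref{eq:cyclic-norm-conic}. Every element of $K$ can be written uniquely as $x+w_2y$ with $x,y\in\F_q$, and then $\Norm(x+w_2y)=x^2+D_2y^2$. So showing $h_i\in\Torus$ amounts to verifying $x_i^2+D_2y_i^2=1$ for the three coordinate pairs in \eqref{eq:cyclic-a-hi}. Membership in $K^*$ is then automatic.

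\textbf{The sum identity.} This is immediate. The $\F_q$-parts of $h_1,h_2,h_3$ add up to $x_1+x_2+(\alpha-x_1-x_2)=\alpha$. The $w_2$-parts add up to $y_1+y_2-y_1-y_2=0$. Hence $h_1+h_2+h_3=\alpha$.

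\textbf{The element $h_1$.} Since $1-x_1^2$ is a nonsquare by Step~1 and $D_2$ is a nonsquare, the quotient $(1-x_1^2)/D_2$ is a nonzero square. So $y_1\ne0$ with \eqref{eq:cyclic-a-y1} exists, and \eqref{eq:cyclic-a-y1} is exactly the statement $\Norm(h_1)=1$. This also justifies dividing by $D_2y_1$ in \eqref{eq:cyclic-a-y2}.

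\textbf{The element $h_3$.} I reduce $\Norm(h_3)=1$ to $\Norm(h_2)=1$. Expanding $(\alpha-x_1-x_2)^2+D_2(y_1+y_2)^2$ and substituting $D_2y_1^2=1-x_1^2$ and $D_2y_2^2=1-x_2^2$ (the latter under the hypothesis $\Norm(h_2)=1$), the condition $\Norm(h_3)=1$ becomes
\[
D_2y_1y_2=\alpha(x_1+x_2)-x_1x_2-\frac{\alpha^2+1}{2}.
\]
This is precisely the defining relation \eqref{eq:cyclic-a-y2}. So, granting $\Norm(h_2)=1$, the third element lies in $\Torus$ automatically.

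\textbf{The element $h_2$ (main step).} Write $L(x_2)$ for the numerator in \eqref{eq:cyclic-a-y2}. Then
\[
D_2y_2^2=\frac{L(x_2)^2}{D_2y_1^2}=\frac{L(x_2)^2}{1-x_1^2}.
\]
Hence $\Norm(h_2)=1$ is equivalent to
\[
L(x_2)^2-(1-x_1^2)(1-x_2^2)=0.
\]
This is a quadratic in $x_2$. The claim to verify is that, in characteristic $3$ and using $2^{-1}=-1$, it equals $-\bigl(A(x_1)x_2^2+B(x_1)x_2+C(x_1)\bigr)$.

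For instance, the $x_2^2$-coefficient is $(\alpha-x_1)^2+(1-x_1^2)=\alpha^2+\alpha x_1+1=-A(x_1)$. The $x_2$-coefficient and the constant term must be matched against $B$ and $C$ by direct expansion.

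Given this identification, Step~2 supplies a root $x_2\in\F_q$, which exists because $A(x_1)\ne0$ and $\Delta(x_1)$ is a nonzero square by Lemma~\ref{lem:cyclic-a-parameter}. That root gives $\Norm(h_2)=1$, and then $\Norm(h_3)=1$ follows from the previous paragraph.

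\textbf{Expected obstacle.} The only real work is the coefficient check for $B$ and $C$. It is routine, but one must be careful with the reductions $2=-1$ and $2^{-1}=-1$. I would do it by expanding $L^2$ with $L=(\alpha-x_1)x_2+\alpha x_1+\alpha^2+1$ and comparing term by term.
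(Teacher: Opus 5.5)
Your proposal is correct and is essentially the paper's proof. The paper uses the same identity $A(x_1)x_2^2+B(x_1)x_2+C(x_1)=(1-x_1^2)(1-x_2^2)-R^2$, where its $R$ is your $L$, to obtain $D_2y_2^2=1-x_2^2$ from $y_2=R/(D_2y_1)$, and it uses the same expansion $x_3^2+D_2y_3^2-1=R-D_2y_1y_2=0$ for $h_3$. The $B$ and $C$ coefficient matches you deferred do hold in characteristic $3$: $2(\alpha-x_1)(\alpha x_1+\alpha^2+1)=-B(x_1)$ and $(\alpha x_1+\alpha^2+1)^2-(1-x_1^2)=-C(x_1)$. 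Your observation that $(1-x_1^2)/D_2$ is a nonzero square also justifies the choice of $y_1$, which the paper leaves implicit.
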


\begin{proof}
	Equation \eqref{eq:cyclic-norm-conic} gives
	\[
	\Norm(x+w_2y)=x^2+D_2y^2.
	\]
	Put
	\[
	x_3=\alpha-x_1-x_2,
	\quad
	y_3=-y_1-y_2.
	\]
	Then $h_i=x_i+w_2y_i$ and $h_1+h_2+h_3=\alpha$. It remains to prove
	\[
	x_i^2+D_2y_i^2=1,
	\qquad 1\leq i\leq3.
	\]
	The first equality is \eqref{eq:cyclic-a-y1}. Let
	\[
	R=\alpha(x_1+x_2)-x_1x_2-\frac{\alpha^2+1}{2}.
	\]
	A direct expansion in characteristic $3$ gives
	\begin{equation}\label{eq:cyclic-a-conic-identity}
		A(x_1)x_2^2+B(x_1)x_2+C(x_1)
		=(1-x_1^2)(1-x_2^2)-R^2.
	\end{equation}
	The left-hand side is zero by \eqref{eq:cyclic-a-x2}; hence
	\[
	R^2=(1-x_1^2)(1-x_2^2).
	\]
	Since $D_2y_1^2=1-x_1^2\neq0$ and \eqref{eq:cyclic-a-y2} gives $y_2=R/(D_2y_1)$, we obtain
	\[
	D_2y_2^2
	=\frac{R^2}{D_2y_1^2}
	=1-x_2^2.
	\]
	Thus $\Norm(h_2)=1$. Using the first two norm equations and the definitions of $x_3,y_3$, another direct expansion yields
	\[
	x_3^2+D_2y_3^2-1=R-D_2y_1y_2,
	\]
	which is zero by \eqref{eq:cyclic-a-y2}. Hence $\Norm(h_3)=1$ as well.
\end{proof}

For the case where $z^{-1}=(S^{q-1})^{-1}\in\TorusZero$, we next show that the equations
\[S=\beta_1+\beta_2+\beta_3, ~{\rm and} ~S=u_1\beta_1^* + u_2 \beta_2^* + u_3 \beta_3^*=u_1\beta^{j_1} + u_2 \beta^{j_2} + u_3 \beta^{j_3}
\]
are decompositions of $S$ with respect to $\Torus$ and $\TorusZero$, respectively.

\begin{lem}\label{lem:cyclic-a-output}
	Under the notation of the first cyclic normalization, the elements recovered in Step~5 satisfy
	\[
	\beta_i^*=\beta^{j_i}\in\TorusZero,
	\qquad
	\beta_i=u_i\beta_i^*,
	\qquad 1\le i\le3.
	\]
	The coordinate indices $j_1,j_2,j_3$ are pairwise distinct, and the returned vector $\boldsymbol e_S$ satisfies
	\[
	\sigma_{p_C}(\boldsymbol e_S)=S,
	\quad
	\wt(\boldsymbol e_S)=3.
	\]
\end{lem}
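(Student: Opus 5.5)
The proof will first check that each $\beta_i$ lies in $\Torus$ and that the three of them sum to $S$. It will then show that they form a minimum-length representation. After that, everything follows from Proposition~\ref{prop:coset-length} and Remark~\ref{rem:inverse-signed-columns}.

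\emph{Membership and sum.} By Lemma~\ref{lem:cyclic-a-conic}, $h_1,h_2,h_3\in\Torus$ and $h_1+h_2+h_3=\alpha$. Also $h\in\TorusZero\subseteq\Torus$ by construction. Since $\Torus$ is a group, each $\beta_i=h_ih$ lies in $\Torus$. Moreover
\[
\beta_1+\beta_2+\beta_3=h(h_1+h_2+h_3)=h\alpha=S
\]
by \eqref{eq:cyclic-a-alpha}.

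\emph{Recovering the pairs $(j_i,u_i)$.} Because $(q+1)/2$ is odd (Proposition~\ref{prop:signed-columns}), $\Torus=\TorusZero\sqcup(-\TorusZero)$. For $t\in\Torus$ we have $t^{q+1}=1$, so $t^{(q+1)/2}=\pm1$, and $t\in\TorusZero$ exactly when $t^{(q+1)/2}=1$. Indeed $\TorusZero$ is the unique subgroup of index two in the cyclic group $\Torus$, and it is the kernel of $t\mapsto t^{(q+1)/2}$. For $t\in-\TorusZero$, write $t=-s$ with $s\in\TorusZero$; then $t^{(q+1)/2}=(-1)^{(q+1)/2}=-1$. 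Hence the $u_i$ defined in Step~5 agree with $u_{P_C}(\beta_i)$, and $\beta_i^*=u_i\beta_i\in\TorusZero=\langle\beta\rangle$. Since $\beta$ has order $n$, the index $j_i\in\{0,\dots,n-1\}$ with $\beta^{j_i}=\beta_i^*$ exists and is unique. This gives $\beta_i=u_i\beta^{j_i}$, so $(j_i,u_i)=\iota_{P_C}^{-1}(\beta_i)$.

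\emph{Distinct indices.} This is the only nonformal step. Suppose $j_a=j_b$ for some $a\ne b$. Then $\beta_a=\pm\beta_b$.
\begin{itemize}
\item If $\beta_a=-\beta_b$, the two terms cancel and $S$ is a single element of $\Torus$, so $\ellT(S)\le1$.
\item If $\beta_a=\beta_b$, then in characteristic $3$ we have $\beta_a+\beta_b=2\beta_a=-\beta_a\in\Torus$, so $S$ is a sum of two elements of $\Torus$ and $\ellT(S)\le2$.
\end{itemize}
Both cases contradict the hypothesis $\ellT(S)=3$ of Algorithm~\ref{alg:cyclic}. This is the same argument as in the proof of Proposition~\ref{prop:coset-length}. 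Therefore $j_1,j_2,j_3$ are pairwise distinct.

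\emph{Conclusion.} Since the indices are distinct, $\boldsymbol e_S$ has exactly three nonzero entries, so $\wt(\boldsymbol e_S)=3$. Its syndrome is
\[
\sigma_{p_C}(\boldsymbol e_S)=\sum_{i=1}^3 u_i\beta^{j_i}=\sum_{i=1}^3\beta_i=S.
\]
Finally, $3=\ellT(S)=\omega_{P_C}(S)$ by Proposition~\ref{prop:coset-length}, so $\boldsymbol e_S$ is a coset leader.
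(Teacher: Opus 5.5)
Your proof is correct and follows essentially the same route as the paper: you get membership and $S=\beta_1+\beta_2+\beta_3$ from Lemma~\ref{lem:cyclic-a-conic}, recover $(j_i,u_i)$ via $\Torus=\TorusZero\sqcup(-\TorusZero)$, and obtain distinct indices because an equal or opposite pair would shorten the representation below $\ellT(S)=3$. The one addition is that you explicitly justify why the test $\beta_i^{(q+1)/2}=\pm1$ in Step~5 detects membership in $\TorusZero$, which the paper leaves implicit.
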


\begin{proof}
	By Lemma~\ref{lem:cyclic-a-conic}, $h_i\in\Torus$ and $h_1+h_2+h_3=\alpha$. Since $h\in\TorusZero\subseteq\Torus$ and $S=\alpha h$,
	\[
	S=(h_1+h_2+h_3)h=\beta_1+\beta_2+\beta_3.
	\]
	Moreover, each $\beta_i=h_i h$ lies in $\Torus$. The disjoint union
	\[
	\Torus=\TorusZero\sqcup(-\TorusZero)
	\]
	implies that exactly one of $\beta_i$ and $-\beta_i$ belongs to $\TorusZero$. 
	If $\beta_i\in \TorusZero$, then Step 5 sets $\beta_i^*=\beta_i$ and $u_i=1$. If $\beta_i\in-\mathcal \TorusZero$, then Step 5 sets $\beta_i^*=-\beta_i$ and $u_i=-1$. In both cases,
	\[
	\beta_i^*\in \TorusZero,
	\quad
	\beta_i=u_i\beta_i^*.
	\]
	Hence the rule in Step~5 is precisely the inverse map in Remark~\ref{rem:inverse-signed-columns}; it gives unique $u_i\in\F_3^*$ and $j_i$ such that
	\[
	\beta_i=u_i\beta^{j_i}.
	\]
	Consequently,
	\[
	S=u_1\beta^{j_1}+u_2\beta^{j_2}+u_3\beta^{j_3}
	=\sigma_{p_C}(\boldsymbol e_S).
	\]
	Since $\ellT(S)=3$, no two of the $\beta_i$ are equal or opposite; otherwise their sum could be shortened to a decomposition of length at most two. Therefore $j_1,j_2,j_3$ are pairwise distinct, and $\wt(\boldsymbol e_S)=3$.
\end{proof}

\subsection{The second cyclic normalization}\label{subsec:cyclic-b}

Retain $z=S^{q-1}$ from Algorithm~\ref{alg:cyclic}. Suppose that $-z^{-1}\in\TorusZero$. With $M=(q+1)/2$, define
\begin{equation}\label{eq:cyclic-b-normalization}
	h=(-z^{-1})^{(M+1)/2}\in\TorusZero,
	\qquad
	h^2=-z^{-1}.
\end{equation}
Put
\begin{equation}\label{eq:cyclic-b-alpha}
	\alpha=\frac{S}{w_2h}.
\end{equation}
Using $w_2^{q-1}=-1$ and $h^{q-1}=h^{-2}$, we obtain $\alpha^{q-1}=1$, and hence $\alpha\in\F_q^*$.
Define
\begin{align}
	A(y)&=2\alpha D_2^2y-\alpha^2D_2^2-D_2,\nonumber\\
	B(y)&=2\alpha D_2^2y^2-D_2y+\alpha^3D_2^2+\alpha D_2,\nonumber\\
	C(y)&=(-\alpha^2D_2^2-D_2)y^2
	+(\alpha^3D_2^2+\alpha D_2)y
	-\alpha^4D_2^2+\alpha^2D_2,\nonumber\\
	\Delta(y)&=B(y)^2-A(y)C(y).
	\label{eq:cyclic-b-polynomials}
\end{align}

The second cyclic construction is as follows.
\begin{enumerate}
	\item [{\bf Step 1:}] Search over $y_1\in\F_q^*$ in any fixed order until
	\[
	1-D_2y_1^2\text{ is a nonzero square},\quad
	A(y_1)\neq0,\quad
	\Delta(y_1)\text{ is a nonzero square}.
	\]
	Lemma~\ref{lem:cyclic-b-parameter} proves that the search terminates. Choose $x_1\in\F_q^*$ satisfying
	\begin{equation}\label{eq:cyclic-b-x1}
		x_1^2=1-D_2y_1^2.
	\end{equation}
	\item [{\bf Step 2:}] Solve
	\begin{equation}\label{eq:cyclic-b-y2}
		A(y_1)y_2^2+B(y_1)y_2+C(y_1)=0
	\end{equation}
	in $\F_q$ and choose either root.
	\item [{\bf Step 3:}] Define
	\begin{equation}\label{eq:cyclic-b-x2}
		x_2=\frac{D_2\alpha(y_1+y_2)-D_2y_1y_2-(D_2\alpha^2+1)/2}{x_1}.
	\end{equation}
	\item [{\bf Step 4:}] Put
	\begin{align}
		h_1&=x_1+w_2y_1,\nonumber\\
		h_2&=x_2+w_2y_2,\nonumber\\
		h_3&=(-x_1-x_2)+w_2(\alpha-y_1-y_2),
		\label{eq:cyclic-b-hi}
	\end{align}
	Lemma~\ref{lem:cyclic-b-conic} shows that $h_1,h_2,h_3$ are in $\Torus$.
	
	\item [{\bf Step 5:}] Set
	\begin{equation}\label{eq:cyclic-b-beta}
		\beta_i=h_i h,
		\qquad 1\le i\le3.
	\end{equation}
	By Lemma~\ref{lem:cyclic-b-conic}, the elements $\beta_i$ lie in $\Torus$. As in Remark~\ref{rem:inverse-signed-columns}, define
	\[
	u_i=
	\begin{cases}
		1, & \beta_i^{(q+1)/2}=1,\\
		-1, & \beta_i^{(q+1)/2}=-1,
	\end{cases}
	\qquad
	\beta_i^*=u_i\beta_i\in\TorusZero.
	\]
	Let $j_i$ be the unique index with
	\[
	\beta_i^*=\beta^{j_i},
	\qquad 0\le j_i<n.
	\]
	Then $\beta_i=u_i\beta^{j_i}$. Define $\boldsymbol e_S$ by $(\boldsymbol e_S)_{j_i}=u_i$ for $1\le i\le3$ and by zero in all other coordinates. Lemma~\ref{lem:cyclic-b-output} proves that the indices are pairwise distinct and that this vector has syndrome $S$ and weight $3$.
\end{enumerate}

\begin{lem}\label{lem:cyclic-b-parameter}
	Let $q=3^m$ with even $m\geq2$, let $D_2\in\F_q^*$ be a nonsquare, and let $\alpha\in\F_q^*$. For the polynomials in \eqref{eq:cyclic-b-polynomials}, there exists $y_1\in\F_q^*$ such that $1-D_2y_1^2$ is a nonzero square, $A(y_1)\neq0$, and $\Delta(y_1)$ is a nonzero square.
\end{lem}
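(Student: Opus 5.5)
The plan is to follow the proof of Lemma~\ref{lem:cyclic-a-parameter} almost verbatim, with the roles of the two conic coordinates exchanged: count the admissible $y_1$ with a weighted quadratic-character sum and bound it by the Weil estimate. Put
\[
g(y)=1-D_2y^2 .
\]
The first step is a direct expansion in characteristic $3$ of $\Delta(y)=B(y)^2-A(y)C(y)$ from \eqref{eq:cyclic-b-polynomials}. I expect a factorization analogous to \eqref{eq:cyclic-a-factorization}, namely
\[
\Delta(y)=c\,g(y)\,Q(y),
\]
where $c\in\F_q^*$ is an explicit constant (something like $-\alpha D_2^{k}$ for a small power $k$) and $Q$ is a quadratic in $y$ depending on $\alpha$ and $D_2$. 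The intended structure is clear: the substitution $x\leftrightarrow\sqrt{D_2}\,y$ over $\overline{\F}_q$ should carry the first normalization to the second, up to harmless scalings. If the expansion gives something different, I would still aim to show directly that $\Delta$ is squarefree of degree $4$.

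Next I verify the nondegeneracy conditions. The roots of $g$ satisfy $y^2=D_2^{-1}$, and $D_2^{-1}$ is a nonsquare, so $g$ is separable and has no roots in $\F_q$. This is why no restriction $\alpha\neq\pm1$ is needed here, unlike in Lemma~\ref{lem:cyclic-a-parameter}. I then compute $\operatorname{disc}(Q)$ and show it is nonzero; by analogy with case (a) it should be a nonzero constant times a power of $D_2$. I also show $\gcd(g,Q)=1$: evaluating $Q$ at a root $y_0$ with $D_2y_0^2=1$ and reducing with $y_0^2=D_2^{-1}$ gives an expression $u+vy_0$ with $u,v\in\F_q$. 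Its vanishing would force a relation such as $\alpha^2D_2=1$ or $\alpha^2D_2=-1$, and both are impossible: $D_2$ is a nonsquare and, since $q\equiv1\pmod4$, so is $-D_2$. Finally, $A(y)$ is linear with leading coefficient $2\alpha D_2^2\neq0$, hence nonzero. Together these make $\Delta$ a squarefree quartic that is not a constant times a square.

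With $E=\{0\}\cup\{y\in\F_q:A(y)\Delta(y)=0\}$, so that $|E|\le6$, I consider
\[
N_1=\sum_{y\in\F_q\setminus E}\bigl(1+\chi(g(y))\bigr)\bigl(1+\chi(\Delta(y))\bigr).
\]
Here $N_1/4$ counts the admissible $y_1$, since $g(y)\neq0$ for every $y\in\F_q$. Let $N$ be the same sum over all of $\F_q$. Expanding,
\[
N=q+\sum_{y}\chi(g(y))+\sum_{y}\chi(\Delta(y))+\sum_{y}\chi(g(y)\Delta(y)).
\]
The first character sum has absolute value $1$, since $g$ is a separable quadratic. The second is at most $3\sqrt q$ in absolute value by the Weil bound. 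For the third, $g\Delta=c\,g^2Q$ and $g$ never vanishes on $\F_q$, so it equals $\chi(c)\sum_y\chi(Q(y))$, which has absolute value $1$. Hence $N\ge q-3\sqrt q-2$. Removing the at most six points of $E$, each contributing at most $4$, gives
\[
N_1\ge q-3\sqrt q-26,
\]
which is positive for $q\ge81$. The remaining case $q=9$ is handled by a finite Magma check over all $\alpha\in\F_9^*$, as in Lemma~\ref{lem:cyclic-a-parameter}.

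The main obstacle is the first step: obtaining the explicit factorization of $\Delta$ in characteristic $3$ and confirming that $Q$ is separable and coprime to $g$ for every $\alpha\in\F_q^*$. The character-sum bookkeeping afterwards is routine.
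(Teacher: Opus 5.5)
Your proposal is correct and follows the paper's proof essentially step for step. The expansion does give $\Delta(y)=-D_2^2\alpha\,g(y)Q_{\alpha,D_2}(y)$ with $Q_{\alpha,D_2}(y)=D_2\alpha y^2+(1-D_2\alpha^2)y+D_2\alpha^3+\alpha$ and $\operatorname{disc}(Q_{\alpha,D_2})=1$. The paper gets coprimality by noting that $Q_{\alpha,D_2}$ splits over $\F_q$ while $g$ has no $\F_q$-roots; your evaluation argument also works, since $Q_{\alpha,D_2}(y_0)=(1-D_2\alpha^2)(y_0-\alpha)$. Your bound $1$ on the third character sum is valid and slightly sharper than the paper's $3$. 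The one thing to add: the lemma quantifies over all nonsquares $D_2$, so the $q=9$ check must range over every nonsquare $D_2$ as well as every $\alpha\in\F_9^*$, as the paper's Magma check does.
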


\begin{proof}
	Let $\chi$ be the quadratic character of $\F_q$ and define
	\[
	g(y)=1-D_2y^2,
	\quad
	Q_{\alpha,D_2}(y)
	=D_2\alpha y^2+(1-D_2\alpha^2)y+D_2\alpha^3+\alpha.
	\]
	Direct expansion gives
	\begin{equation}\label{eq:cyclic-b-factorization}
		\Delta(y)=-D_2^2\alpha\,g(y)Q_{\alpha,D_2}(y),
		\quad
		\operatorname{disc}(Q_{\alpha,D_2})=1.
	\end{equation}
	Because $D_2$ is a nonsquare, $g$ has no root in $\F_q$. The polynomial $Q_{\alpha,D_2}$ has two distinct roots in $\F_q$, so the two quadratic factors are separable and coprime. Thus $\Delta$ is a squarefree quartic. Also, $A$ is a nonzero linear polynomial.
	
	Let
	\[
	E_{\alpha,D_2}=\{0\}\cup\{y\in\F_q:A(y)\Delta(y)=0\}.
	\]
	Then $|E_{\alpha,D_2}|\leq6$. Define
	\[
	N_1=\sum_{y\in\F_q\setminus E_{\alpha,D_2}}
	\bigl(1+\chi(g(y))\bigr)
	\bigl(1+\chi(\Delta(y))\bigr).
	\]
	For $y\notin E_{\alpha,D_2}$, the summand is $4$ exactly when $g(y)$ and $\Delta(y)$ are nonzero squares. Thus $N_1/4$ counts the admissible values of $y_1$.
	
	For the complete sum $N$ over $\F_q$,
	\[
	N=q+\sum_{y\in\F_q}\chi(g(y))
	+\sum_{y\in\F_q}\chi(\Delta(y))
	+\sum_{y\in\F_q}\chi(g(y)\Delta(y)).
	\]
	The first character sum has absolute value one, and the Weil bound for the squarefree quartic $\Delta$ gives an upper bound $3\sqrt q$ for the absolute value of the second. Furthermore,
	\[
	g(y)\Delta(y)=-D_2^2\alpha\,g(y)^2Q_{\alpha,D_2}(y).
	\]
	The complete character sum for the separable quadratic $Q_{\alpha,D_2}$ has absolute value one; omitting the at most two roots of $g$ changes it by at most two. Therefore the last sum has absolute value at most three. After deleting the exceptional set,
	\[
	N_1\geq q-3\sqrt q-28.
	\]
	This is positive for $q\geq81$. For $q=9$, Magma \cite{Magma} checks every nonsquare $D_2$ and every $\alpha\in\F_9^*$; the minimum number of valid $y_1$ is one.
\end{proof}

\begin{lem}\label{lem:cyclic-b-conic}
	The elements in \eqref{eq:cyclic-b-hi} satisfy
	\[
	h_1,h_2,h_3\in\Torus
	\quad\text{and}\quad
	h_1+h_2+h_3=w_2\alpha.
	\]
\end{lem}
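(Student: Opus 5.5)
The plan is to follow the proof of Lemma~\ref{lem:cyclic-a-conic} line by line, with the roles of $x$ and $y$ exchanged and the scalings by $D_2$ tracked. By \eqref{eq:cyclic-norm-conic}, $\Norm(x+w_2y)=x^2+D_2y^2$ for $x,y\in\F_q$. We set
\[
x_3=-x_1-x_2,\qquad y_3=\alpha-y_1-y_2,
\]
so that $h_i=x_i+w_2y_i$ and $h_1+h_2+h_3=w_2\alpha$ directly from \eqref{eq:cyclic-b-hi}. The lemma then reduces to checking the three conic equations $x_i^2+D_2y_i^2=1$ for $i=1,2,3$.

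\textbf{The case $i=1$.} This equation is exactly \eqref{eq:cyclic-b-x1}. Step~1 guarantees that $1-D_2y_1^2$ is a nonzero square, so $x_1\neq0$. Hence the division in \eqref{eq:cyclic-b-x2} is legitimate.

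\textbf{The case $i=2$.} Put
\[
R=D_2\alpha(y_1+y_2)-D_2y_1y_2-\frac{D_2\alpha^2+1}{2},
\]
so that $x_2=R/x_1$. The key identity to verify, by direct expansion in characteristic $3$, is the analogue of \eqref{eq:cyclic-a-conic-identity}:
\[
A(y_1)y_2^2+B(y_1)y_2+C(y_1)=c\bigl[(1-D_2y_1^2)(1-D_2y_2^2)-R^2\bigr].
\]
Here $c$ is a nonzero constant, which I expect to be $D_2$ or $1$; the $D_2$ and $D_2^2$ factors in \eqref{eq:cyclic-b-polynomials} suggest the former. To find it, compare the $y_2^2$ coefficients. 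On the left this coefficient is $A(y_1)$. On the right it is $c(-D_2+D_2^2y_1^2-(D_2\alpha-D_2y_1)^2)$, which expands to $c(-D_2-D_2^2\alpha^2+2D_2^2\alpha y_1)$. This matches $A(y_1)$ with $c=1$, so I would confirm the remaining coefficients with $c=1$. Given the identity, \eqref{eq:cyclic-b-y2} yields
\[
R^2=(1-D_2y_1^2)(1-D_2y_2^2)=x_1^2(1-D_2y_2^2),
\]
and therefore $x_2^2=R^2/x_1^2=1-D_2y_2^2$.

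\textbf{The case $i=3$.} Expand
\[
x_3^2+D_2y_3^2-1=(x_1+x_2)^2+D_2(\alpha-y_1-y_2)^2-1.
\]
Substitute $x_1^2=1-D_2y_1^2$ and $x_2^2=1-D_2y_2^2$. The square terms in $D_2y_i^2$ cancel, and what remains is
\[
2x_1x_2+1+D_2\alpha^2-2D_2\alpha(y_1+y_2)+2D_2y_1y_2.
\]
Since $2=-1$, this equals $-(x_1x_2-R)$, using $-(D_2\alpha^2+1)/2=D_2\alpha^2+1$ in characteristic $3$. It vanishes because $x_1x_2=R$ by \eqref{eq:cyclic-b-x2}.

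The only step requiring real care is the polynomial identity in the case $i=2$: the constant $c$ and all coefficients must be checked against \eqref{eq:cyclic-b-polynomials}. Everything else is formal bookkeeping parallel to the first cyclic normalization.
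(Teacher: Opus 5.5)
Your proposal is correct and follows the paper's proof step for step. It uses the same auxiliary $x_3=-x_1-x_2$ and $y_3=\alpha-y_1-y_2$, the same $R$ with $x_2=R/x_1$, the polynomial identity \eqref{eq:cyclic-b-conic-identity} with $c=1$, and the final reduction $x_3^2+D_2y_3^2-1=R-x_1x_2=0$. You left the $y_2$ and constant coefficients unchecked; they also match with $c=1$ (using $-2=1$ in characteristic $3$), so the identity holds exactly as the paper states it.
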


\begin{proof}
For $x,y\in\F_q$ one again has
	\[
	\Norm(x+w_2y)=x^2+D_2y^2.
	\]
	Set
	\[
	x_3=-x_1-x_2,
	\quad
	y_3=\alpha-y_1-y_2.
	\]
	Then $h_i=x_i+w_2y_i$ and $h_1+h_2+h_3=w_2\alpha$. Equation \eqref{eq:cyclic-b-x1} gives $\Norm(h_1)=1$. Put
	\[
	R=D_2\alpha(y_1+y_2)-D_2y_1y_2-\frac{D_2\alpha^2+1}{2}.
	\]
	A direct expansion gives
	\begin{equation}\label{eq:cyclic-b-conic-identity}
		A(y_1)y_2^2+B(y_1)y_2+C(y_1)
		=(1-D_2y_1^2)(1-D_2y_2^2)-R^2.
	\end{equation}
	The left-hand side is zero by \eqref{eq:cyclic-b-y2}, and then $R^2=(1-D_2y_1^2)(1-D_2y_2^2)$.
	Since $x_1^2=1-D_2y_1^2\neq0$ and \eqref{eq:cyclic-b-x2} gives $x_2=R/x_1$, it follows that
	\[
	x_2^2=\frac{R^2}{1-D_2y_1^2}=1-D_2y_2^2.
	\]
	Hence $\Norm(h_2)=1$. Finally, using the first two norm equations,
	\[
	x_3^2+D_2y_3^2-1=R-x_1x_2=0.
	\]
 because $x_2=R/x_1$. Therefore $\Norm(h_3)=1$, and all three elements belong to $H$.
\end{proof}

For the case where $-z^{-1}=-(S^{q-1})^{-1}\in\TorusZero$, we next show that the equations
\[S=\beta_1+\beta_2+\beta_3, ~{\rm and} ~S=u_1\beta_1^* + u_2 \beta_2^* + u_3 \beta_3^*=u_1\beta^{j_1} + u_2 \beta^{j_2} + u_3 \beta^{j_3}
\]
are decompositions of $S$ with respect to $\Torus$ and $\TorusZero$, respectively.

\begin{lem}\label{lem:cyclic-b-output}
	Under the notation of the second cyclic normalization, the elements recovered in Step~5 satisfy
	\[
	\beta_i^*=\beta^{j_i}\in\TorusZero,
	\quad
	\beta_i=u_i\beta_i^*,
	\qquad 1\le i\le3.
	\]
	The indices $j_1,j_2,j_3$ are pairwise distinct, and the returned vector $\boldsymbol e_S$ satisfies
	\[
	\sigma_{p_C}(\boldsymbol e_S)=S,
	\quad
	\wt(\boldsymbol e_S)=3.
	\]
\end{lem}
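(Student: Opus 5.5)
The argument follows the proof of Lemma~\ref{lem:cyclic-a-output}, with the only new ingredient being the reconstruction of $S$ from the normalization \eqref{eq:cyclic-b-alpha}.

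\emph{Recovering $S$.} By Lemma~\ref{lem:cyclic-b-conic}, each $h_i$ lies in $\Torus$ and
\[
h_1+h_2+h_3=w_2\alpha .
\]
By \eqref{eq:cyclic-b-alpha}, $S=w_2\alpha h$. Multiplying the displayed sum by $h$ therefore gives
\[
S=(h_1+h_2+h_3)h=\beta_1+\beta_2+\beta_3 .
\]
Since $h\in\TorusZero\subseteq\Torus$ and $\Torus$ is a group, each $\beta_i=h_ih$ lies in $\Torus$.

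\emph{The sign and index rule of Step~5.} Here I would use the decomposition $\Torus=\TorusZero\sqcup(-\TorusZero)$ from Proposition~\ref{prop:signed-columns}.
\begin{itemize}
\item $\TorusZero$ is the subgroup of order $(q+1)/2$ in the cyclic group $\Torus$. So for $t\in\Torus$ we have $t^{(q+1)/2}=1$ exactly when $t\in\TorusZero$.
\item If $t\in-\TorusZero$, then $t^{(q+1)/2}=(-1)^{(q+1)/2}=-1$, because $(q+1)/2$ is odd for even $m$.
\end{itemize}
Hence the rule in Step~5 picks out the unique $u_i\in\F_3^*$ with $\beta_i^*=u_i\beta_i\in\TorusZero$. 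Since $u_i^2=1$, this gives $\beta_i=u_i\beta_i^*$. Because $\TorusZero=\langle\beta\rangle$ has order $n$, there is a unique $j_i$ with $0\le j_i<n$ and $\beta_i^*=\beta^{j_i}$. This is exactly the inverse map $\iota_{P_C}^{-1}$ of Remark~\ref{rem:inverse-signed-columns}.

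\emph{Syndrome and weight.} Combining the two previous steps,
\[
S=u_1\beta^{j_1}+u_2\beta^{j_2}+u_3\beta^{j_3}=\sigma_{p_C}(\boldsymbol e_S),
\]
provided the $j_i$ are distinct, so that the entries of $\boldsymbol e_S$ do not overwrite one another. Distinctness is the only point needing an argument, and it comes from the input hypothesis $\ellT(S)=3$.
\begin{itemize}
\item Suppose $j_a=j_b$ with $a\ne b$. Then $\beta_a=\pm\beta_b$.
\item If $\beta_a=-\beta_b$, the two terms cancel and $S$ is a single element of $\Torus$, so $\ellT(S)\le1$.
\item If $\beta_a=\beta_b$, then $\beta_a+\beta_b=2\beta_a=-\beta_a\in\Torus$ in characteristic $3$, so $\ellT(S)\le2$.
\end{itemize}
Both cases contradict $\ellT(S)=3$. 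So $j_1,j_2,j_3$ are pairwise distinct, the three nonzero entries of $\boldsymbol e_S$ sit in different coordinates, and $\wt(\boldsymbol e_S)=3$.

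None of these steps is a real obstacle; the substantive work is already in Lemmas~\ref{lem:cyclic-b-parameter} and \ref{lem:cyclic-b-conic}. The only points needing care are the identity $S=w_2\alpha h$, which removes the extra factor $w_2$ that distinguishes this branch from the first normalization, and the parity of $(q+1)/2$ used in the sign rule.
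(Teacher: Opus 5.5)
Your proposal is correct and follows essentially the same route as the paper: you recover $S=(h_1+h_2+h_3)h$ from Lemma~\ref{lem:cyclic-b-conic} and $S=w_2\alpha h$, identify Step~5 with $\iota_{P_C}^{-1}$ via $\Torus=\TorusZero\sqcup(-\TorusZero)$, and use $\ellT(S)=3$ to exclude equal or opposite summands. Your explicit check that $t^{(q+1)/2}=\pm1$ separates $\TorusZero$ from $-\TorusZero$, and your placing of distinctness of the $j_i$ before the syndrome identity, only make explicit points the paper's proof passes over quickly.
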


\begin{proof}
	By Lemma~\ref{lem:cyclic-b-conic}, $h_i\in\Torus$ and
	\[
	h_1+h_2+h_3=w_2\alpha.
	\]
	Since $h\in\TorusZero\subseteq\Torus$ and $S=w_2\alpha h$,
	\[
	S=(h_1+h_2+h_3)h=\beta_1+\beta_2+\beta_3.
	\]
	Each $\beta_i=h_i h$ belongs to $\Torus$. The disjoint union
	\[
	\Torus=\TorusZero\sqcup(-\TorusZero)
	\]
	implies that exactly one of $\beta_i$ and $-\beta_i$ belongs to $\TorusZero$. 
	
	If $\beta_i\in \TorusZero$, then Step 5 sets $\beta_i^*=\beta_i$ and $u_i=1$. If $\beta_i\in-\mathcal \TorusZero$, then Step 5 sets $\beta_i^*=-\beta_i$ and $u_i=-1$. In both cases,
	\[
	\beta_i^*\in \TorusZero,
	\quad
	\beta_i=u_i\beta_i^*.
	\]
	Hence the rule in Step~5 is precisely the inverse map in Remark~\ref{rem:inverse-signed-columns}; it gives unique $u_i\in\F_3^*$ and $j_i$ such that
	\[
	\beta_i=u_i\beta^{j_i}.
	\]
	It follows that
	\[
	\sigma_{p_C}(\boldsymbol e_S)
	=u_1\beta^{j_1}+u_2\beta^{j_2}+u_3\beta^{j_3}=S.
	\]
	Because $\ellT(S)=3$, no two summands $\beta_i$ are equal or opposite. Hence $j_1,j_2,j_3$ are pairwise distinct and $\wt(\boldsymbol e_S)=3$.
\end{proof}

\begin{proof}[Proof of Theorem~\ref{thm:main-decoding} \textnormal{(1)}]
When $\omega_{P_C}(S)\le2$, Lemma~\ref{lem:short-decomposition} gives a minimum-length decomposition. Remark~\ref{rem:inverse-signed-columns} converts its summands into a coset leader. Now suppose that $\ellT(S)=3$. The disjoint union
\[
\Torus=\TorusZero\sqcup(-\TorusZero)
\]
shows that exactly one of $z^{-1}$ and $-z^{-1}$ belongs to $\TorusZero$; hence the two branches of Algorithm~\ref{alg:cyclic} are exhaustive. Lemmas~\ref{lem:cyclic-a-parameter} and \ref{lem:cyclic-b-parameter} guarantee that every parameter search terminates. The nonzero discriminants make the quadratic steps solvable with two distinct roots. Finally, Lemmas~\ref{lem:cyclic-a-output} and \ref{lem:cyclic-b-output} prove that Step~5 returns an error vector of weight three and syndrome $S$. Proposition~\ref{prop:coset-length} then proves that this vector is a coset leader.
\end{proof}

\section{Constacyclic three-term decompositions}\label{sec4}

Assume throughout this section that $m\geq3$ is odd, put $q=3^m$, and set $n=(q+1)/2$. Let $\beta$ and $\theta$ be as in Definition~\ref{constacyclic}, and put
\[
\mathcal B=\{1,\beta,\ldots,\beta^{n/2-1}\},
\quad
\mathcal P_D=\mathcal B\sqcup\theta\mathcal B.
\]
By Proposition~\ref{prop:signed-columns},
\begin{equation}\label{eq:constacyclic-partitions}
	\TorusZero=\mathcal B\sqcup(-\mathcal B),
	\quad
	\Torus=\mathcal P_D\sqcup(-\mathcal P_D)
	=\TorusZero\sqcup\theta\TorusZero.
\end{equation}
Since $\theta^q=-\theta$ and $\theta^{q+1}=1$,
\begin{equation}\label{eq:constacyclic-norm-conic}
	\Norm(x+\theta y)=(x+\theta y)(x-\theta y)=x^2+y^2.
\end{equation}
for all $x,y\in\F_q$.

\begin{algorithm}\label{alg:constacyclic}
	\textnormal{Input:} a syndrome $S\in K^*$ with $\ellT(S)=3$.
	
	\textnormal{Output:} an error vector $\boldsymbol e_S\in\F_3^n$ satisfying
	\[
	\sigma_{p_D}(\boldsymbol e_S)=S,
	\quad
	\wt(\boldsymbol e_S)=3.
	\]
	\begin{enumerate}
		\item Set $z=S^{q-1}\in\Torus$.
		\item If $z^{-1}\in\TorusZero$, apply the first constacyclic normalization in Subsection~\ref{subsec:constacyclic-a}. Otherwise $\theta z^{-1}\in\TorusZero$, and apply the second constacyclic normalization in Subsection~\ref{subsec:constacyclic-b}.
		\item In either branch, the construction produces $\beta_1,\beta_2,\beta_3\in\Torus$ with $S=\beta_1+\beta_2+\beta_3$. Compute
		\[
		(j_i,u_i)=\iota_{P_D}^{-1}(\beta_i),
		\quad 1\leq i\leq3,
		\]
		and return the vector $\boldsymbol e_S$ whose entries at $j_1,j_2,j_3$ are $u_1,u_2,u_3$, respectively, and whose remaining entries are zero.
	\end{enumerate}
\end{algorithm}

The square roots in Algorithm~\ref{alg:constacyclic} may be computed by a standard finite-field square-root routine or from a precomputed table in $\Torus$. The two roots differ by a sign, and \eqref{eq:constacyclic-partitions} contains exactly one of them in the actual column-label set $\mathcal P_D$.

\subsection{The first constacyclic normalization}\label{subsec:constacyclic-a}

Retain $z=S^{q-1}$ from Algorithm~\ref{alg:constacyclic}. Suppose that $z^{-1}\in\TorusZero$. Compute a square root $h_0\in\Torus$ of $z^{-1}$, and let $h$ be the unique element of $\{h_0,-h_0\}$ that lies in $\mathcal P_D$. Thus
\begin{equation}\label{eq:constacyclic-a-normalization}
	h^2=z^{-1},
	\quad
	h\in\mathcal P_D.
\end{equation}
Define
\begin{equation}\label{eq:constacyclic-a-alpha}
	\alpha=\frac{S}{h}.
\end{equation}
Then $\alpha^{q-1}=1$, so $\alpha\in\F_q^*$. Moreover, $\alpha\notin\{1,-1\}$ because $\ellT(S)=3$.
Define
\begin{align}
	A(x)&=2\alpha x-\alpha^2-1,\nonumber\\
	B(x)&=2\alpha x^2-x+\alpha^3+\alpha,\nonumber\\
	C(x)&=(-\alpha^2-1)x^2+(\alpha^3+\alpha)x-\alpha^4+\alpha^2,\nonumber\\
	\Delta(x)&=B(x)^2-A(x)C(x).
	\label{eq:constacyclic-a-polynomials}
\end{align}

The first constacyclic construction is as follows.
\begin{enumerate}
	\item [{\bf Step 1:}] Search over $x_1\in\F_q^*$ in any fixed order until
	\[
	1-x_1^2\text{ is a nonzero square},\quad
	A(x_1)\neq0,\quad
	\Delta(x_1)\text{ is a nonzero square}.
	\]
	Lemma~\ref{lem:constacyclic-a-parameter} proves that the search terminates. Choose $y_1\in\F_q^*$ satisfying
	\begin{equation}\label{eq:constacyclic-a-y1}
		y_1^2=1-x_1^2.
	\end{equation}
	\item [{\bf Step 2:}] Solve
	\begin{equation}\label{eq:constacyclic-a-x2}
		A(x_1)x_2^2+B(x_1)x_2+C(x_1)=0
	\end{equation}
	in $\F_q$ and choose either root.
	\item [{\bf Step 3:}] Define
	\begin{equation}\label{eq:constacyclic-a-y2}
		y_2=\frac{\alpha(x_1+x_2)-x_1x_2-(\alpha^2+1)/2}{y_1}.
	\end{equation}
	\item [{\bf Step 4:}] Put
	\begin{align}
		h_1&=x_1+\theta y_1,\nonumber\\
		h_2&=x_2+\theta y_2,\nonumber\\
		h_3&=(\alpha-x_1-x_2)+\theta(-y_1-y_2).
		\label{eq:constacyclic-a-hi}
	\end{align}
	Lemma \ref{lem:constacyclic-a-conic} shows that $h_1,h_2,h_3$ are in $\Torus$.
	
	\item [{\bf Step 5:}] Set
	\begin{equation}\label{eq:constacyclic-a-beta}
		\beta_i=h_i h,
		\qquad 1\le i\le3.
	\end{equation}
	By Lemma~\ref{lem:constacyclic-a-conic}, the elements $\beta_i$ lie in $\Torus$. Apply Remark~\ref{rem:inverse-signed-columns}: for $1\le i\le3$, define
	\[
	u_i=
	\begin{cases}
		1, & \beta_i\in \mathcal{P}_D,\\
		-1, & \beta_i\in-\mathcal{P}_D,
	\end{cases}
	\qquad
	\beta_i^*=u_i\beta_i\in\mathcal{P}_D.
	\]
	The coordinate index $j_i$ is recovered uniquely from
	\[
	\beta_i^*=
	\begin{cases}
		\beta^{j_i}, & 0\le j_i<n/2,\\
		\theta\beta^{j_i-n/2}, & n/2\le j_i<n.
	\end{cases}
	\]
	Thus $\beta_i=u_i p_{j_i}$, where $u_i$ is the recovered error value and $j_i$ is the recovered coordinate position. Define $\boldsymbol e_S$ by $(\boldsymbol e_S)_{j_i}=u_i$ for $1\le i\le3$ and by zero elsewhere. Lemma~\ref{lem:constacyclic-a-output} proves that the indices are pairwise distinct and that this vector has syndrome $S$ and weight $3$.
\end{enumerate}

\begin{lem}\label{lem:constacyclic-a-parameter}
	Let $q=3^m$ with odd $m\geq3$, and let $\alpha\in\F_q^*\setminus\{1,-1\}$. For the polynomials in \eqref{eq:constacyclic-a-polynomials}, there exists $x_1\in\F_q^*$ such that $1-x_1^2$ is a nonzero square, $A(x_1)\neq0$, and $\Delta(x_1)$ is a nonzero square.
\end{lem}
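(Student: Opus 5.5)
The plan is to mirror the proof of Lemma~\ref{lem:cyclic-a-parameter}. The polynomials in \eqref{eq:constacyclic-a-polynomials} coincide with those in \eqref{eq:cyclic-a-polynomials}, so the identity \eqref{eq:cyclic-a-factorization} holds unchanged:
\[
\Delta(x)=-\alpha\, g(x)Q_\alpha(x),\qquad g(x)=1-x^2,\quad Q_\alpha(x)=\alpha x^2+(1-\alpha^2)x+\alpha^3+\alpha .
\]
This identity is a polynomial identity in characteristic $3$ and does not depend on the parity of $m$. As before, $g$ has the roots $\pm1$ and $\operatorname{disc}(Q_\alpha)=1$. Moreover $Q_\alpha(\pm1)\neq0$ because $\alpha\notin\{0,\pm1\}$. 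Hence $\Delta$ is a squarefree quartic and $A$ is a nonzero linear polynomial. The exceptional set $E_\alpha=\{0\}\cup\{x:A(x)\Delta(x)=0\}$ therefore has at most $6$ elements.

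The only change is the sign on $g$, since we now require $1-x_1^2$ to be a nonzero square. Accordingly, consider
\[
N_1=\sum_{x\in\F_q\setminus E_\alpha}\bigl(1+\chi(g(x))\bigr)\bigl(1+\chi(\Delta(x))\bigr).
\]
Each summand equals $4$ exactly for admissible $x$ and equals $0$ otherwise. Let $N$ be the complete sum over $\F_q$. Expanding gives
\[
N=q+\sum\chi(g)+\sum\chi(\Delta)+\sum\chi(g\Delta).
\]
The three character sums are bounded as follows. For the separable quadratic $g$, the first has absolute value $1$. By the Weil bound for the squarefree quartic $\Delta$, the second is at most $3\sqrt q$. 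For the third, note that $g\Delta=-\alpha g^2Q_\alpha$; away from $\pm1$ its character equals $\chi(-\alpha)\chi(Q_\alpha)$, so the sum is at most $1+2=3$. Hence $N\ge q-3\sqrt q-4$. Removing $E_\alpha$ costs at most $24$, giving $N_1\ge q-3\sqrt q-28$.

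It remains to check that $q-3\sqrt q-28>0$. Since $m$ is odd with $m\ge3$, we have $q\ge27$. For $q=27$, $\sqrt{27}\approx5.196$, so $q-3\sqrt q-28\approx27-15.59-28<0$, and the bound fails. For $q=243$ it gives $243-46.8-28>0$, so every $q\ge243$ is covered. The main obstacle is therefore the single field $q=27$. There I would verify the claim directly, by computer (Magma, as in the cyclic case) over all $\alpha\in\F_{27}^*\setminus\{\pm1\}$. If a non-computational argument is preferred, one could sharpen the estimate instead: count the exceptional set exactly, since the roots of $g$ are already excluded by the factor $1+\chi(g)$, and the roots of $Q_\alpha$ contribute only weight $2$ rather than $4$. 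Even so, $27-15.6$ leaves little room, so the finite check is the realistic route.
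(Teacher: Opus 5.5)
Your proposal is correct and is essentially the paper's proof: the same factorization $\Delta=-\alpha\,g\,Q_\alpha$, the same weighted count built from $1+\chi(g)$ and $1+\chi(\Delta)$, the same bound $N_1\ge q-3\sqrt q-28$ (positive once $q\ge 243$), and the same Magma check for the single remaining field $q=27$. One small correction to your optional aside: at $x=\pm1$ the factor $1+\chi(g)$ equals $1$, not $0$, because $\chi(0)=0$; this does not affect the main argument, and your conclusion that $q=27$ needs a direct check still stands.
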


\begin{proof}
	Let $\chi$ be the quadratic character of $\F_q$, and put
	\[
	g(x)=1-x^2,
	\quad
	Q_\alpha(x)=\alpha x^2+(1-\alpha^2)x+\alpha^3+\alpha.
	\]
	As in \eqref{eq:cyclic-a-factorization},
	\[
	\Delta(x)=B(x)^2-A(x)C(x)=-\alpha g(x)Q_\alpha(x).
	\]
	The polynomial $g$ has the two simple roots $\pm1$, while $Q_\alpha$ has discriminant one. Since $\alpha\notin\{0,1,-1\}$,
	\[
	Q_\alpha(1)=(\alpha-1)^2(\alpha+1)\neq0,
	\quad
	Q_\alpha(-1)=(\alpha-1)(\alpha+1)^2\neq0.
	\]
	Thus $\Delta$ is a squarefree quartic, and $A$ is a nonzero linear polynomial.
	
	Set
	\[
	E_\alpha=\{0\}\cup\{x\in\F_q:A(x)\Delta(x)=0\}.
	\]
	Then $|E_\alpha|\leq6$. Consider
	\[
	N_1=\sum_{x\in\F_q\setminus E_\alpha}
	\bigl(1+\chi(g(x))\bigr)
	\bigl(1+\chi(\Delta(x))\bigr).
	\]
	Outside $E_\alpha$, the summand is $4$ exactly when $1-x^2$ and $\Delta(x)$ are both nonzero squares. Hence $N_1/4$ counts the desired choices.
	
	The complete weighted sum is
	\[
	N=q+\sum_{x\in\F_q}\chi(g(x))
	+\sum_{x\in\F_q}\chi(\Delta(x))
	+\sum_{x\in\F_q}\chi(g(x)\Delta(x)).
	\]
	The three nonconstant sums have absolute values at most $1$, $3\sqrt q$, and $3$, respectively. Indeed, the middle estimate is the Weil bound for the squarefree quartic $\Delta$, while
	\[
	g(x)\Delta(x)=-\alpha g(x)^2Q_\alpha(x)
	\]
	reduces the last sum to a character sum for $Q_\alpha$, with at most two omitted values. Therefore
	\[
	N_1\geq q-3\sqrt q-28.
	\]
	For odd $m\geq5$, this quantity is positive. The only remaining field is $q=27$. Magma~\cite{Magma} shows that every admissible $\alpha$ has at least five valid choices of $x_1$.
\end{proof}

\begin{lem}\label{lem:constacyclic-a-conic}
	The elements in \eqref{eq:constacyclic-a-hi} satisfy
	\[
	h_1,h_2,h_3\in\Torus
	\quad\text{and}\quad
	h_1+h_2+h_3=\alpha.
	\]
\end{lem}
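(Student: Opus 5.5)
The argument will follow Lemma~\ref{lem:cyclic-a-conic} almost verbatim, with the nonsquare $D_2$ replaced by $1$ and $w_2$ by $\theta$. By \eqref{eq:constacyclic-norm-conic}, $\Norm(x+\theta y)=x^2+y^2$ for $x,y\in\F_q$. Put
\[
x_3=\alpha-x_1-x_2,\qquad y_3=-y_1-y_2,
\]
so that $h_i=x_i+\theta y_i$ for $1\le i\le 3$. Since the $\theta$-parts cancel, $h_1+h_2+h_3=\alpha$ is immediate. It then remains to show
\[
x_i^2+y_i^2=1,\qquad 1\le i\le 3.
\]
For $i=1$ this is exactly \eqref{eq:constacyclic-a-y1}. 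Step~1 guarantees $1-x_1^2\neq0$, hence $y_1\neq0$, so the division in \eqref{eq:constacyclic-a-y2} is legitimate.

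For $i=2$, set
\[
R=\alpha(x_1+x_2)-x_1x_2-\frac{\alpha^2+1}{2}.
\]
The polynomials $A,B,C$ in \eqref{eq:constacyclic-a-polynomials} are literally the same as in \eqref{eq:cyclic-a-polynomials}. Therefore the characteristic-$3$ identity \eqref{eq:cyclic-a-conic-identity},
\[
A(x_1)x_2^2+B(x_1)x_2+C(x_1)=(1-x_1^2)(1-x_2^2)-R^2,
\]
holds unchanged, as it is a polynomial identity in $\alpha,x_1,x_2$ alone. By \eqref{eq:constacyclic-a-x2} the left side vanishes, so $R^2=(1-x_1^2)(1-x_2^2)$. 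Because $y_2=R/y_1$ and $y_1^2=1-x_1^2$, this gives $y_2^2=R^2/y_1^2=1-x_2^2$. A root $x_2\in\F_q$ exists since $A(x_1)\neq0$ and $\Delta(x_1)$ is a nonzero square, by Step~1 and Lemma~\ref{lem:constacyclic-a-parameter}.

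For $i=3$, expand $x_3^2+y_3^2-1$ and use $x_1^2+y_1^2=x_2^2+y_2^2=1$ to reduce it. In characteristic $3$, with $2^{-1}=2$, the expansion is
\[
x_3^2+y_3^2-1=\alpha^2+2-2\alpha(x_1+x_2)+2x_1x_2+2y_1y_2-1 .
\]
Since $-2=1$ and $\alpha^2+1=-2\cdot\frac{\alpha^2+1}{2}$, this equals $R-y_1y_2$ up to the sign conventions. By \eqref{eq:constacyclic-a-y2}, $R-y_1y_2=0$. Hence $\Norm(h_3)=1$, and all three $h_i$ lie in $\Torus$.

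The only real obstacle is the routine but sign-sensitive expansion in the last step and the verification of the identity \eqref{eq:cyclic-a-conic-identity}. I would carry out both carefully modulo $3$, mirroring the cyclic computation with $D_2=1$.
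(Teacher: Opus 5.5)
Your proposal is correct and is essentially the paper's proof: the same substitution $x_3=\alpha-x_1-x_2$, $y_3=-y_1-y_2$, the same identity $A(x_1)x_2^2+B(x_1)x_2+C(x_1)=(1-x_1^2)(1-x_2^2)-R^2$ giving $\Norm(h_2)=1$, and the same reduction $x_3^2+y_3^2-1=R-y_1y_2$ for $h_3$. One small correction in the last step: the relation you need is $\alpha^2+1=-\frac{\alpha^2+1}{2}$ (because $2^{-1}=-1$ in characteristic $3$), not $-2\cdot\frac{\alpha^2+1}{2}$, and with it $x_3^2+y_3^2-1=\alpha^2+1+\alpha(x_1+x_2)-x_1x_2-y_1y_2=R-y_1y_2$ holds exactly rather than ``up to sign conventions''. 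The exact sign matters, since \eqref{eq:constacyclic-a-y2} makes $R-y_1y_2$ vanish but not, say, $R+y_1y_2$.
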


\begin{proof}
	By \eqref{eq:constacyclic-norm-conic},
	\[
	\Norm(x+\theta y)=x^2+y^2.
	\]
	Let
	\[
	x_3=\alpha-x_1-x_2,
	\quad
	y_3=-y_1-y_2.
	\]
	Then $h_i=x_i+\theta y_i$ and $h_1+h_2+h_3=\alpha$. The first norm equation follows from \eqref{eq:constacyclic-a-y1}. Put
	\[
	R=\alpha(x_1+x_2)-x_1x_2-\frac{\alpha^2+1}{2}.
	\]
	A direct expansion gives
	\begin{equation}\label{eq:constacyclic-a-conic-identity}
		A(x_1)x_2^2+B(x_1)x_2+C(x_1)
		=(1-x_1^2)(1-x_2^2)-R^2.
	\end{equation}
	The left-hand side is zero by \eqref{eq:constacyclic-a-x2}, and then $(1-x_1^2)(1-x_2^2)=R^2$. Since $y_1^2=1-x_1^2\neq0$ and \eqref{eq:constacyclic-a-y2} gives $y_2=R/y_1$, we obtain
	\[
	y_2^2=\frac{R^2}{1-x_1^2}=1-x_2^2.
	\]
	Thus $\Norm(h_2)=1$. Under the first two norm equations, expansion of the third gives
	\[
	x_3^2+y_3^2-1=R-y_1y_2=0.
	\]
	Hence $\Norm(h_3)=1$.
\end{proof}

For the case where $z^{-1}=-(S^{q-1})^{-1}\in\TorusZero$, we next show that the equations
\[S=\beta_1+\beta_2+\beta_3=u_1\beta_1^* + u_2 \beta_2^* + u_3 \beta_3^*
\]
are decompositions of $S$ with respect to $\Torus$ and $\mathcal P_D$, respectively.

\begin{lem}\label{lem:constacyclic-a-output}
	Under the notation of the first constacyclic normalization, Step~5 produces unique $u_i\in\F_3^*$, $\beta_i^*=p_{j_i}\in\Lambda_D$, and coordinate indices $j_i$ such that
	\[
	\beta_i=u_i\beta_i^*,
	\quad 1\le i\le3.
	\]
	The indices $j_1,j_2,j_3$ are pairwise distinct, and the returned vector $\boldsymbol e_S$ satisfies
	\[
	\sigma_{p_D}(\boldsymbol e_S)=S,
	\quad
	\wt(\boldsymbol e_S)=3.
	\]
\end{lem}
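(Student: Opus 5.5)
The proof will follow the pattern of Lemma~\ref{lem:cyclic-a-output}, with the cyclic partition replaced by the constacyclic partition \eqref{eq:constacyclic-partitions}. Here $\Lambda_D$ is read as the column-label set $\mathcal P_D$.

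\textbf{Step 1: the three-term decomposition.} By Lemma~\ref{lem:constacyclic-a-conic} we have $h_1,h_2,h_3\in\Torus$ and $h_1+h_2+h_3=\alpha$. By \eqref{eq:constacyclic-a-normalization} and \eqref{eq:constacyclic-a-alpha}, $h\in\mathcal P_D\subseteq\Torus$ and $S=\alpha h$. Multiplying the sum by $h$ gives
\[
S=h_1h+h_2h+h_3h=\beta_1+\beta_2+\beta_3 .
\]
Each $\beta_i$ lies in $\Torus$, because $\Torus$ is a group.

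\textbf{Step 2: recovering $u_i$, $\beta_i^*$ and $j_i$.} By Proposition~\ref{prop:signed-columns}, $\Torus=\mathcal P_D\sqcup(-\mathcal P_D)$, so exactly one of $\beta_i,-\beta_i$ lies in $\mathcal P_D$. The rule in Step~5 therefore defines a unique $u_i\in\F_3^*$ with $\beta_i^*=u_i\beta_i\in\mathcal P_D$. Since $u_i^2=1$, this also gives $\beta_i=u_i\beta_i^*$.

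Next, $\mathcal P_D=\mathcal B\sqcup\theta\mathcal B$. The elements $1,\beta,\ldots,\beta^{n/2-1}$ are distinct because $\beta$ has order $n$, and the $\theta$-translates are distinct for the same reason. Hence $\beta_i^*$ equals exactly one label $p_{j_i}$, namely $\beta^{j_i}$ with $j_i<n/2$, or $\theta\beta^{j_i-n/2}$ with $n/2\le j_i<n$. This is exactly the inverse map $\iota_{P_D}^{-1}$ of Remark~\ref{rem:inverse-signed-columns}, so $\beta_i=u_ip_{j_i}$ with $(j_i,u_i)$ unique.

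\textbf{Step 3: syndrome and distinctness of indices.} From Step~2,
\[
\sigma_{p_D}(\boldsymbol e_S)=\sum_i u_ip_{j_i}=\sum_i\beta_i=S .
\]
Suppose $j_a=j_b$ for some $a\neq b$. Then $\beta_a=\pm\beta_b$. If $\beta_a=-\beta_b$, the two terms cancel and $S$ is a single torus element. If $\beta_a=\beta_b$, then $\beta_a+\beta_b=2\beta_a=-\beta_a\in\Torus$, and $S$ is a sum of two torus elements. Either way $\ellT(S)\le2$, contradicting the input assumption $\ellT(S)=3$. So $j_1,j_2,j_3$ are pairwise distinct. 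It follows that $\boldsymbol e_S$ has exactly three nonzero entries and that no entry was overwritten.

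\textbf{Main obstacle.} There is no hard step. The only care needed is in the bookkeeping of the uniqueness of $h$ and $j_i$:
\begin{itemize}
\item $\pm h_0$ are distinct, and exactly one of them lies in $\mathcal P_D$, so $h$ is well defined.
\item The index $j_i$ is well defined because $\mathcal B$ and $\theta\mathcal B$ are disjoint and each has $n/2$ distinct elements.
\end{itemize}
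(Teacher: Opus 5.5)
Your proof is correct and follows essentially the same route as the paper: you multiply the conic decomposition $h_1+h_2+h_3=\alpha$ by $h$ to get $S=\beta_1+\beta_2+\beta_3$, recover $(j_i,u_i)$ from the partition $\Torus=\mathcal P_D\sqcup(-\mathcal P_D)$, and use $\ellT(S)=3$ to rule out equal or opposite summands. Your explicit check that an equal or opposite pair would force $\ellT(S)\le 2$ spells out a step the paper states in one line.
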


\begin{proof}
	By Lemma~\ref{lem:constacyclic-a-conic}, $h_i\in\Torus$ and $h_1+h_2+h_3=\alpha$. Since $h\in \mathcal P_D\subseteq\Torus$ and $S=\alpha h$,
	\[
	S=(h_1+h_2+h_3)h=\beta_1+\beta_2+\beta_3.
	\]
	Each $\beta_i=h_i h$ belongs to $\Torus$. The disjoint union
	\[
	\Torus=\mathcal P_D\sqcup(-\mathcal P_D)
	\]
	shows that exactly one of $\beta_i$ and $-\beta_i$ is an actual column label.
	
 If $\beta_i\in \mathcal P_D$, then Step 5 sets $\beta_i^*=\beta_i$ and $u_i=1$. If $\beta_i\notin \mathcal P_D$, then $\beta_i\in-\mathcal P_D$, and Step 5 sets $\beta_i^*=-\beta_i$ and $u_i=-1$. In both cases,
	\[
	\beta_i^*\in \mathcal P_D,
	\quad
	\beta_i=u_i\beta_i^*.
	\]
	 Therefore Step~5 is precisely the inverse map of Remark~\ref{rem:inverse-signed-columns}; it gives unique $u_i$ and $j_i$ satisfying $\beta_i=u_i p_{j_i}$. Hence
	\[
	\sigma_{p_D}(\boldsymbol e_S)
	=u_1p_{j_1}+u_2p_{j_2}+u_3p_{j_3}=S.
	\]
	Since $\ellT(S)=3$, no two of the $\beta_i$ are equal or opposite. Thus $j_1,j_2,j_3$ are pairwise distinct and $\wt(\boldsymbol e_S)=3$.
\end{proof}

\subsection{The second constacyclic normalization}\label{subsec:constacyclic-b}

Retain $z=S^{q-1}$ from Algorithm~\ref{alg:constacyclic}. Suppose that $z^{-1}\notin\TorusZero$. By \eqref{eq:constacyclic-partitions}, one has $\theta z^{-1}\in\TorusZero$. Compute a square root $h_0\in\Torus$ of $\theta z^{-1}$, and choose the unique $h\in\{h_0,-h_0\}$ that lies in $\mathcal P_D$. Thus
\begin{equation}\label{eq:constacyclic-b-normalization}
	h^2=\theta z^{-1},
	\quad
	h\in\mathcal P_D.
\end{equation}
Define
\begin{equation}\label{eq:constacyclic-b-alpha}
	\alpha=\frac{S}{(1-\theta)h}.
\end{equation}
Since $\theta^2=-1$ and $\theta^q=-\theta$, we have
\[
\alpha^{q-1}
=\frac{\theta h^{-2}}{(1-\theta)^{q-1}h^{-2}}
=\frac{\theta(1-\theta)}{1+\theta}=1,
\]
so $\alpha\in\F_q^*$. Moreover, $\alpha\notin\{1,-1\}$. Indeed, if $\alpha=1$, then
\[
S=h+(-\theta h),
\]
and if $\alpha=-1$, then
\[
S=(-h)+\theta h;
\]
in either case $S$ would have additive length at most two.

Define
\begin{align}
	A(x)&=\alpha x+2\alpha^2+1,\nonumber\\
	B(x)&=\alpha x^2+x+2\alpha^3+\alpha,\nonumber\\
	C(x)&=(2\alpha^2+1)x^2+(2\alpha^3+\alpha)x
	+2\alpha^4+2\alpha^2,\nonumber\\
	\Delta(x)&=B(x)^2-A(x)C(x).
	\label{eq:constacyclic-b-polynomials}
\end{align}

The second constacyclic construction is as follows.
\begin{enumerate}
	\item [{\bf Step 1:}] Search over $x_1\in\F_q^*$ in any fixed order until
	\[
	2-x_1^2\text{ is a nonzero square},\quad
	A(x_1)\neq0,\quad
	\Delta(x_1)\text{ is a nonzero square}.
	\]
	Lemma~\ref{lem:constacyclic-b-parameter} proves that the search terminates. Choose $y_1\in\F_q^*$ satisfying
	\begin{equation}\label{eq:constacyclic-b-y1}
		y_1^2=2-x_1^2.
	\end{equation}
	\item [{\bf Step 2:}] Solve
	\begin{equation}\label{eq:constacyclic-b-x2}
		A(x_1)x_2^2+B(x_1)x_2+C(x_1)=0
	\end{equation}
	in $\F_q$ and choose either root.
	\item [{\bf Step 3:}] Define
	\begin{equation}\label{eq:constacyclic-b-y2}
		y_2=\frac{2\alpha(x_1+x_2)-x_1x_2-(2\alpha^2+1)}{y_1}.
	\end{equation}
	\item [{\bf Step 4:}] Put
	\[
	x_3=2\alpha-x_1-x_2,
	\quad
	y_3=-y_1-y_2,
	\]
	and define
	\begin{align}
		h_1&=(-x_1-y_1)+\theta(x_1-y_1),\nonumber\\
		h_2&=(-x_2-y_2)+\theta(x_2-y_2),\nonumber\\
		h_3&=(-x_3-y_3)+\theta(x_3-y_3).
		\label{eq:constacyclic-b-hi}
	\end{align}
	\item [{\bf Step 5:}] Set
	\begin{equation}\label{eq:constacyclic-b-beta}
		\beta_i=h_i h,
		\qquad 1\le i\le3.
	\end{equation}
	By Lemma~\ref{lem:constacyclic-b-conic}, the elements $\beta_i$ lie in $\Torus$. Using the inverse map of Remark~\ref{rem:inverse-signed-columns}, define
	\[
	u_i=
	\begin{cases}
		1, & \beta_i\in\mathcal{P}_D,\\
		-1, & \beta_i\in-\mathcal{P}_D,
	\end{cases}
	\qquad
	\beta_i^*=u_i\beta_i\in\mathcal{P}_D.
	\]
	Recover the unique coordinate index $j_i$ from
	\[
	\beta_i^*=
	\begin{cases}
		\beta^{j_i}, & 0\le j_i<n/2,\\
		\theta\beta^{j_i-n/2}, & n/2\le j_i<n.
	\end{cases}
	\]
	Then $\beta_i=u_i p_{j_i}$. Define $\boldsymbol e_S$ by $(\boldsymbol e_S)_{j_i}=u_i$ for $1\le i\le3$ and by zero in the remaining coordinates. Lemma~\ref{lem:constacyclic-b-output} proves that the indices are pairwise distinct and that this vector has syndrome $S$ and weight $3$.
\end{enumerate}

\begin{lem}\label{lem:constacyclic-b-parameter}
	Let $q=3^m$ with odd $m\geq3$, and let $\alpha\in\F_q^*\setminus\{1,-1\}$. For the polynomials in \eqref{eq:constacyclic-b-polynomials}, there exists $x_1\in\F_q^*$ such that $2-x_1^2$ is a nonzero square, $A(x_1)\neq0$, and $\Delta(x_1)$ is a nonzero square.
\end{lem}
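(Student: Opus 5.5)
The argument will follow the template of Lemma~\ref{lem:constacyclic-a-parameter}: factor the discriminant, count admissible parameters with a character-sum weight, bound the resulting sums, and treat the smallest field separately.

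\textbf{Factorization of $\Delta$.} Put $g(x)=2-x^2$. Guided by the cyclic and first constacyclic cases, I expect a characteristic-$3$ factorization
\[
\Delta(x)=B(x)^2-A(x)C(x)=c_\alpha\, g(x)\,Q(x),
\]
with $c_\alpha\in\F_q^*$ and $Q$ a quadratic in $x$ whose coefficients depend on $\alpha$. The natural guess is that $Q$ is $Q_\alpha$ after the substitution $\alpha\mapsto 2\alpha$, up to a constant. This is suggested by $x_3=2\alpha-x_1-x_2$ and the scaled norm form $x^2+y^2=2$ coming from $(1-\theta)$. I will verify this by direct expansion.

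\textbf{Squarefreeness of $\Delta$.}
\begin{itemize}
\item Since $m$ is odd, $2=-1$ is a nonsquare in $\F_q$. Hence $g$ has no root in $\F_q$, but it is separable with roots $\pm\sqrt2$ in $\F_{q^2}$.
\item I will compute $\operatorname{disc}(Q)$ and expect a nonzero constant, so $Q$ is separable.
\item Coprimality of $g$ and $Q$ amounts to $Q(\pm\sqrt2)\neq0$. Using $x^2=2$, write $Q(\pm\sqrt2)=u\pm v\sqrt2$ with $u,v\in\F_q$. Vanishing forces $u=v=0$, and I expect this to happen only for $\alpha\in\{0,\pm1\}$, which are excluded by hypothesis.
\item $A$ is a nonzero linear polynomial because $\alpha\ne0$.
\end{itemize}
Together these make $\Delta$ a squarefree quartic that is not a constant times a square.

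\textbf{Counting.} Let $E_\alpha=\{0\}\cup\{x:A(x)\Delta(x)=0\}$, so $|E_\alpha|\le 6$. Consider
\[
N_1=\sum_{x\notin E_\alpha}\bigl(1+\chi(g(x))\bigr)\bigl(1+\chi(\Delta(x))\bigr),
\]
and let $N$ be the complete sum over $\F_q$. Expanding $N$ gives four terms:
\begin{itemize}
\item the main term $q$;
\item $\sum\chi(g)$, which has absolute value $1$ because $g$ is a separable quadratic;
\item $\sum\chi(\Delta)$, which is at most $3\sqrt q$ in absolute value by the Weil bound;
\item $\sum\chi(g\Delta)=\chi(c_\alpha)\sum\chi(g^2Q)$. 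Since $g$ has no roots in $\F_q$, this equals $\chi(c_\alpha)\sum\chi(Q)$, which has absolute value $1$.
\end{itemize}
Thus $N\ge q-3\sqrt q-2$. Removing the exceptional points costs at most $24$, so $N_1\ge q-3\sqrt q-26>0$ for $q\ge243$, i.e.\ odd $m\ge5$. Each admissible $x_1$ contributes $4$, so $N_1>0$ yields an $x_1$ with $g(x_1)$ and $\Delta(x_1)$ nonzero squares and $A(x_1)\ne0$.

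\textbf{The case $q=27$.} Here $q-3\sqrt q-26<0$, so the bound fails. I will settle this case by exhaustive computer verification (Magma) over all $\alpha\in\F_{27}^*\setminus\{\pm1\}$, as done in the earlier lemmas.

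The main obstacle is the factorization step. Confirming the exact form of $Q$, its discriminant, and the non-vanishing of $Q$ at $\pm\sqrt2$ requires a careful characteristic-$3$ expansion. A wrong guess for $Q$ would break the squarefreeness argument and hence the Weil estimate.
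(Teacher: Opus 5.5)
Your proposal is correct and is essentially the paper's proof: factor $\Delta=-\alpha\,g\,Q$, use that $g$ has no roots in $\F_q$ and that $Q$ is separable to get a squarefree quartic, bound the weighted character sum with the Weil bound for odd $m\ge5$, and check $q=27$ by Magma. Your observation that $g$ has no $\F_q$-roots also sharpens the cross-term bound to $1$, where the paper uses $3$.

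One correction: your guessed $Q$ (substituting $\alpha\mapsto2\alpha$ into $Q_\alpha$) is wrong. Direct expansion gives $Q(x)=\alpha x^2+(\alpha^2+1)x+\alpha^3-\alpha$, whose discriminant is $1$. So $Q$ splits over $\F_q$ and is automatically coprime to the irreducible $g$. Your planned check of $Q(\pm\sqrt2)$ is therefore unnecessary; in fact $Q(\pm\sqrt2)$ never vanishes for $\alpha\in\F_q^*$.
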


\begin{proof}
	Let $\chi$ be the quadratic character of $\F_q$. Put
	\[
	g(x)=2-x^2,
	\quad
	Q_\alpha^{(2)}(x)
	=\alpha x^2+(\alpha^2+1)x+\alpha^3-\alpha.
	\]
	A direct expansion gives
	\begin{equation}\label{eq:constacyclic-b-factorization}
		\Delta(x)=-\alpha g(x)Q_\alpha^{(2)}(x),
		\quad
		\operatorname{disc}(Q_\alpha^{(2)})=1.
	\end{equation}
	Since $m$ is odd, $2=-1$ is a nonsquare in $\F_q$, and therefore $g$ has no root in $\F_q$. The polynomial $Q_\alpha^{(2)}$ has two distinct roots in $\F_q$, so the two quadratic factors are separable and coprime. Hence $\Delta$ is a squarefree quartic. The polynomial $A$ is nonzero and linear.
	
	Let
	\[
	E_\alpha=\{0\}\cup\{x\in\F_q:A(x)\Delta(x)=0\},
	\quad |E_\alpha|\leq6,
	\]
	and define
	\[
	N_1=\sum_{x\in\F_q\setminus E_\alpha}
	\bigl(1+\chi(g(x))\bigr)
	\bigl(1+\chi(\Delta(x))\bigr).
	\]
	For $x\notin E_\alpha$, the summand is $4$ precisely when $2-x^2$ and $\Delta(x)$ are both nonzero squares. The complete sum over $\F_q$ expands as
	\[
	N=q+\sum_{x\in\F_q}\chi(g(x))
	+\sum_{x\in\F_q}\chi(\Delta(x))
	+\sum_{x\in\F_q}\chi(g(x)\Delta(x)).
	\]
	The first nonconstant sum has absolute value one. Since $\Delta$ is squarefree of degree four, the second has absolute value at most $3\sqrt q$. Finally,
	\[
	g(x)\Delta(x)=-\alpha g(x)^2Q_\alpha^{(2)}(x),
	\]
	so the last sum is a character sum for $Q_\alpha^{(2)}$, with at most two omitted values, and has absolute value at most three. After deleting the exceptional set,
	\[
	N_1\geq q-3\sqrt q-28.
	\]
	This is positive for odd $m\geq5$. For $q=27$, Magma~\cite{Magma} shows that every admissible $\alpha$ has at least six valid choices of $x_1$.
\end{proof}

\begin{lem}\label{lem:constacyclic-b-conic}
	The elements in \eqref{eq:constacyclic-b-hi} satisfy
	\[
	h_1,h_2,h_3\in\Torus
	\quad\text{and}\quad
	h_1+h_2+h_3=(1-\theta)\alpha.
	\]
\end{lem}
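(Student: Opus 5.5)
The plan is to verify everything through the norm form \eqref{eq:constacyclic-norm-conic}, following the same pattern as Lemma~\ref{lem:constacyclic-a-conic} but in rotated coordinates. There are three steps: compute the norm of a general element of the shape used in \eqref{eq:constacyclic-b-hi}, sum the three elements, and then check one norm equation for each $h_i$.

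\emph{Step 1: the norm of the rotated element.} For $x,y\in\F_q$, \eqref{eq:constacyclic-norm-conic} gives
\[
\Norm\bigl((-x-y)+\theta(x-y)\bigr)=(x+y)^2+(x-y)^2=2x^2+2y^2=-(x^2+y^2)
\]
in characteristic $3$. Hence $h_i\in\Torus$ if and only if $x_i^2+y_i^2=-1=2$.

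\emph{Step 2: the sum.} Using $x_1+x_2+x_3=2\alpha$ and $y_1+y_2+y_3=0$, the $\F_q$-part of $h_1+h_2+h_3$ is $-2\alpha=\alpha$. The $\theta$-part is $2\alpha=-\alpha$. So $h_1+h_2+h_3=(1-\theta)\alpha$.

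\emph{Step 3: the three norm equations.} Here I copy the argument of Lemma~\ref{lem:constacyclic-a-conic}.

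For $h_1$, the condition $x_1^2+y_1^2=2$ is exactly \eqref{eq:constacyclic-b-y1}.

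For $h_2$, set $R=2\alpha(x_1+x_2)-x_1x_2-(2\alpha^2+1)$, so that $y_2=R/y_1$ by \eqref{eq:constacyclic-b-y2}. The key identity to verify by direct expansion in characteristic $3$ is
\[
A(x_1)x_2^2+B(x_1)x_2+C(x_1)=(2-x_1^2)(2-x_2^2)-R^2 .
\]
By \eqref{eq:constacyclic-b-x2} the left side vanishes, so $R^2=(2-x_1^2)(2-x_2^2)$. Since $y_1^2=2-x_1^2\neq0$, this gives $y_2^2=R^2/y_1^2=2-x_2^2$.

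For $h_3$, write $s=x_1+x_2$. Expand $x_3^2+y_3^2-2=(2\alpha-s)^2+(y_1+y_2)^2-2$, and substitute $y_i^2=2-x_i^2$ for $i=1,2$. Reducing coefficients mod $3$, this should simplify to
\[
\alpha^2-\alpha s-x_1x_2-1-y_1y_2 .
\]
Since $R=-\alpha s-x_1x_2+\alpha^2-1$, this equals $R-y_1y_2$, which is zero because $y_2=R/y_1$. Hence $\Norm(h_3)=1$.

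The main obstacle is the polynomial identity for $A,B,C$ in Step 3: it is a routine but error-prone expansion in which the coefficients $2\alpha^2+1$, $2\alpha^3+\alpha$ and $2\alpha^4+2\alpha^2$ must be matched mod $3$. I would do it by expanding $R^2$ as a quadratic in $x_2$ and comparing coefficients of $x_2^2$, $x_2$ and $1$ separately. The remaining steps are short checks.
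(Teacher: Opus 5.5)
Your proposal is correct and matches the paper's proof step for step. You reduce $\Norm(h_i)=1$ to $x_i^2+y_i^2=2$ via the norm form, and for $h_2$ you use the identity $A(x_1)x_2^2+B(x_1)x_2+C(x_1)=(2-x_1^2)(2-x_2^2)-R^2$ with $y_2=R/y_1$. For $h_3$ you reduce $x_3^2+y_3^2-2$ to $R-y_1y_2$, and you read off the sum from $\sum x_i=2\alpha$, $\sum y_i=0$. I checked both expansions: modulo $3$, the $x_2^2$, $x_2$ and constant coefficients of $(1+x_1^2)(1+x_2^2)-R^2$ are exactly $A(x_1)$, $B(x_1)$ and $C(x_1)$, and your reduction for $h_3$ is also correct.
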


\begin{proof}
	For $a,b\in\F_q$, equation \eqref{eq:constacyclic-norm-conic} gives
	\[
	\Norm(a+\theta b)=a^2+b^2.
	\]
	Write
	\[
	h_i=(-x_i-y_i)+\theta(x_i-y_i).
	\]
	Then
	\[
	\Norm(h_i)=(-x_i-y_i)^2+(x_i-y_i)^2
	=2(x_i^2+y_i^2).
	\]
	Since the characteristic is $3$, $2\cdot2=1$. It is therefore enough to prove
	\[
	x_i^2+y_i^2=2,
	\qquad 1\leq i\leq3.
	\]
	The first equality is \eqref{eq:constacyclic-b-y1}. Put
	\[
	R=2\alpha(x_1+x_2)-x_1x_2-(2\alpha^2+1).
	\]
	A direct expansion gives
	\begin{equation}\label{eq:constacyclic-b-conic-identity}
		A(x_1)x_2^2+B(x_1)x_2+C(x_1)
		=(2-x_1^2)(2-x_2^2)-R^2.
	\end{equation}
	The left-hand side vanishes by \eqref{eq:constacyclic-b-x2}. Since $y_1^2=2-x_1^2\neq0$ and \eqref{eq:constacyclic-b-y2} gives $y_2=R/y_1$, we conclude that
	\[
	y_2^2=\frac{R^2}{2-x_1^2}=2-x_2^2.
	\]
	Thus $x_2^2+y_2^2=2$. Using the first two conic equations, expansion of the third yields
	\[
	x_3^2+y_3^2-2=R-y_1y_2=0.
	\]
	Hence all three elements have norm one.
	
	Finally,
	\[
	x_1+x_2+x_3=2\alpha,
	\qquad
	y_1+y_2+y_3=0.
	\]
	Therefore
	\[
	h_1+h_2+h_3=-2\alpha+2\alpha\theta
	=(1-\theta)\alpha,
	\]
	where the last equality uses characteristic $3$.
\end{proof}

For the case where $\theta z^{-1}=\theta (S^{q-1})^{-1}\in\TorusZero$, we next show that the equations
\[S=\beta_1+\beta_2+\beta_3=u_1\beta_1^* + u_2 \beta_2^* + u_3 \beta_3^*
\]
are decompositions of $S$ with respect to $\Torus$ and $\mathcal P_D$, respectively.

\begin{lem}\label{lem:constacyclic-b-output}
	Under the notation of the second constacyclic normalization, Step~5 produces unique $u_i\in\F_3^*$, $\beta_i^*=p_{j_i}\in\Lambda_D$, and coordinate indices $j_i$ such that
	\[
	\beta_i=u_i\beta_i^*,
	\quad 1\le i\le3.
	\]
	The indices $j_1,j_2,j_3$ are pairwise distinct, and the returned vector $\boldsymbol e_S$ satisfies
	\[
	\sigma_{p_D}(\boldsymbol e_S)=S,
	\quad
	\wt(\boldsymbol e_S)=3.
	\]
\end{lem}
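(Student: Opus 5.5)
The plan mirrors the proof of Lemma~\ref{lem:constacyclic-a-output}. There are three steps: show that the $\beta_i$ sum to $S$ and lie in $\Torus$, show that Step~5 is the inverse signed-column map, and show that the three recovered positions are distinct.

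\emph{Sum and membership.} Lemma~\ref{lem:constacyclic-b-conic} gives $h_1,h_2,h_3\in\Torus$ and $h_1+h_2+h_3=(1-\theta)\alpha$. By \eqref{eq:constacyclic-b-alpha}, $S=(1-\theta)\alpha h$. Multiplying the sum identity by $h$ therefore yields
\[
S=h_1h+h_2h+h_3h=\beta_1+\beta_2+\beta_3 .
\]
Since $h\in\mathcal P_D\subseteq\Torus$ by \eqref{eq:constacyclic-b-normalization} and $\Torus$ is a group, each $\beta_i=h_ih$ lies in $\Torus$.

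\emph{Step~5 is the inverse map.} By \eqref{eq:constacyclic-partitions}, $\Torus=\mathcal P_D\sqcup(-\mathcal P_D)$, so exactly one of $\beta_i$ and $-\beta_i$ lies in $\mathcal P_D$. The rule in Step~5 picks $u_i\in\F_3^*$ accordingly, and $\beta_i^*=u_i\beta_i$ lies in $\mathcal P_D$. Because $u_i^2=1$, this gives $\beta_i=u_i\beta_i^*$.

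The set $\mathcal P_D=\mathcal B\sqcup\theta\mathcal B$ lists the entries of $p_D$ without repetition, since $\beta$ has order $n$. Hence there is a unique index $j_i$ with $\beta_i^*=p_{j_i}$. This is exactly the pair $(j_i,u_i)=\iota_{P_D}^{-1}(\beta_i)$ of Proposition~\ref{prop:signed-columns} and Remark~\ref{rem:inverse-signed-columns}, which proves the uniqueness claim. (The symbol $\Lambda_D$ in the statement is read as the column-label set $\mathcal P_D$.) It follows that
\[
\sigma_{p_D}(\boldsymbol e_S)=\sum_{i=1}^{3}u_ip_{j_i}=\sum_{i=1}^{3}\beta_i=S .
\]

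\emph{Distinctness of positions.} This is the only step that needs an idea beyond bookkeeping, and it uses the hypothesis $\ellT(S)=3$. Suppose $j_a=j_b$ for some $a\neq b$. Then $\beta_a=\pm\beta_b$.
\begin{itemize}
\item If $\beta_a=-\beta_b$, the two terms cancel and $S$ is a single element of $\Torus$.
\item If $\beta_a=\beta_b$, then $\beta_a+\beta_b=2\beta_a=-\beta_a\in\Torus$ in characteristic $3$, so $S$ is a sum of at most two elements of $\Torus$.
\end{itemize}
Either way $\ellT(S)\le2$, a contradiction, exactly as in the proof of Proposition~\ref{prop:coset-length}. So $j_1,j_2,j_3$ are pairwise distinct, $\boldsymbol e_S$ has exactly three nonzero entries $u_1,u_2,u_3$, and $\wt(\boldsymbol e_S)=3$.
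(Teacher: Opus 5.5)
Your proposal is correct and follows the paper's proof step for step. You multiply $h_1+h_2+h_3=(1-\theta)\alpha$ from Lemma~\ref{lem:constacyclic-b-conic} by $h$, use $\mathcal T=\mathcal P_D\sqcup(-\mathcal P_D)$ to identify Step~5 with the inverse signed-column map, and invoke $\ell_{\mathcal T}(S)=3$ to exclude equal or opposite summands; your added details (that $\mathcal P_D$ lists the entries of $p_D$ without repetition, the two-case shortening argument, and reading $\Lambda_D$ as $\mathcal P_D$) only make explicit what the paper leaves implicit.
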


\begin{proof}
	By Lemma~\ref{lem:constacyclic-b-conic}, the elements $h_1,h_2,h_3$ belong to $\Torus$ and satisfy
	\[
	h_1+h_2+h_3=(1-\theta)\alpha.
	\]
	The normalization gives $h\in\mathcal{P}_D\subseteq\Torus$ and
	\[
	S=(1-\theta)\alpha h.
	\]
	Therefore
	\[
	S=(h_1+h_2+h_3)h=\beta_1+\beta_2+\beta_3,
	\]
	and each $\beta_i=h_i h$ lies in $\Torus$. Since
	\[
	\Torus=\mathcal{P}_D\sqcup(-\mathcal{P}_D),
	\]
	for each $i$ there are unique $u_i\in\F_3^*$ and $\beta_i^*=p_{j_i}\in\mathcal{P}_D$ such that $\beta_i=u_i\beta_i^*$. These are exactly the quantities recovered in Step~5. Substitution gives
	\[
	\sigma_{p_D}(\boldsymbol e_S)
	=u_1p_{j_1}+u_2p_{j_2}+u_3p_{j_3}=S.
	\]
	Finally, $\ellT(S)=3$ excludes equal or opposite pairs among the $\beta_i$. Hence $j_1,j_2,j_3$ are pairwise distinct and $\wt(\boldsymbol e_S)=3$.
\end{proof}

\begin{proof}[Proof of Theorem~\ref{thm:main-decoding} \textnormal{(2)}]
	When $\omega_{P_D}(S)\le2$, Lemma~\ref{lem:short-decomposition} gives a minimum-length decomposition. Remark~\ref{rem:inverse-signed-columns} converts its summands into a coset leader. Now suppose that $\ellT(S)=3$. The disjoint union
	\[
	\Torus=\TorusZero\sqcup\theta\TorusZero
	\]
	shows that exactly one of $z^{-1}$ and $\theta z^{-1}$ belongs to $\TorusZero$; hence the two branches of Algorithm~\ref{alg:constacyclic} are exhaustive. Lemmas~\ref{lem:constacyclic-a-parameter} and \ref{lem:constacyclic-b-parameter} guarantee that both parameter searches terminate. Lemmas~\ref{lem:constacyclic-a-output} and \ref{lem:constacyclic-b-output} prove that Step~5 returns an error vector of weight three and syndrome $S$. Proposition~\ref{prop:coset-length} therefore proves that the returned vector is a coset leader.
\end{proof}

\begin{remark}
	The preceding character-sum estimates yield more than the
	existence of the parameters required in Algorithms~1 and~2.
	Indeed, in each of the four branches, the number of admissible
	choices of the first parameter is bounded below by
	\[
	\frac{q-3\sqrt q-28}{4}
	=\frac q4-O(\sqrt q).
	\]
	Hence the admissible parameters have asymptotic density
	$1/4$ in the base field. Consequently, under uniform random
	sampling, a valid parameter is found after an expected
	$4+o(1)$ trials.
	
	By comparison, the direct construction in
	Proposition~\ref{prop:direct-decoder} searches over the
	$q+1$ elements of the norm-one group $\mathcal T$ and, in the
	worst case, may examine the entire group. The positive-density
	estimate above therefore gives a quantitative reason for the
	efficiency of the conic-based decoding procedures, in addition
	to their explicit algebraic structure.
\end{remark}

\section{Conclusion}\label{sec5}
For every syndrome $S$, the first stage of the decoding problem is to determine the associated coset weight $\omega_P(S)$, and the second is to construct a coset leader attaining that weight. For the ternary Gashkov--Sidel'nikov codes, the signed-column bijection identifies the first quantity with the additive length $\ellT(S)$ on the norm-one group $\Torus$. This gives a common structural framework for the cyclic and constacyclic families, while isolating their differences in the normalization and in the final coordinate recovery.

The norm and the quadratic character determine exactly the cosets of weights one and two. The identities
\[
|\Torus+\Torus|=\frac{q^2+2q+3}{2},
\quad
\Torus+\Torus+\Torus=\F_{q^2}
\]
settle the remaining case, determine the complete coset-weight distribution, and give a unified alternative derivation of the known covering radius $3$.

The sumset proof yields a direct search-based coset-leader construction. Algorithms~\ref{alg:cyclic} and \ref{alg:constacyclic} provide structured finite-field realizations of the case $\omega_P(S)=3$. In each branch, Step~5 applies the inverse signed-column map to convert the three torus summands into actual coordinate positions and nonzero error values. The required parameter searches are controlled by quadratic-character sums and Weil bounds. It would be natural to investigate analogous constructions for generalized Zetterberg-type codes, norm-one tori in other odd characteristics, and higher additive lengths of rational points on algebraic groups over finite fields.


\begin{thebibliography}{1}
\bibitem{BGP} D. Bartoli, M. Giulietti, and I. Platoni, On the covering radius of MDS codes, {\it IEEE Trans. Inf. Theory}, 61(2), 801-811, 2015.

\bibitem{NP-Hard} E. Berlekamp, R. McEliece, and H. van Tilborg, On the inherent intractability of certain coding problems (Corresp.), {\it IEEE Trans. Inf. Theory}, 24(3), 384-386, 1978.

\bibitem{BBB2023} M. Bonini, M. Borello, and E. Byrne, Saturating systems and the rank-metric covering radius, {\it J. Algebr. Comb.}, 58(4), 1173?1202, 2023.

\bibitem{Magma} W. Bosma, J. Cannon and C. Playoust, The Magma algebra system I: The user language, {\it J. Symbolic Comput.}, 24, 235-265, 1997.

\bibitem{JLMS-1} R. Calderbank, On uniformly packed $[n, n-k, 4]$ codes over $GF(q)$ and a class of caps in $PG(k-1, q)$, {\em J. London Math. Soc.}, 2(2): 365-384, 1982.

\bibitem{BLMS} A. Cossidente, B. Csajb\'ok, G. Marino, and F. Pavese, Small complete caps in $PG(4n+1,q)$, {\it Bull. Lond. Math. Soc.}, 55(1), 522-535, 2023.
\bibitem{DD}  D. Danev and S. Dodunekov, A family of ternary quasi-perfect BCH codes, {\it Des. Codes Cryptogr.}, 49(3), 265-271, 2008.	


\bibitem{DMP2025} A. A. Davydov, S. Marcugini, and F. Pambianco, New bounds for covering codes of radius $3$ and codimension \(3t+1\), {\it Adv. Math. Commun.}, 19(1), 126?139, 2025.

\bibitem{SM-1} S. M. Dodunekov, Some quasiperfect double error correcting codes, {\it Probl. Control Inf. Theory}, 15(5), 367-375, 1986.
\bibitem{SM} S. M. Dodunekov, The optimal double error correcting codes of Zetterberg and Dumer-Zinoviev are quasiperfect,  {\it Bull. Bulgarian Acad. Sc}, 38(9), 1121-1123, 1985.


\bibitem{DN} S. M. Dodunekov and J. E. M. Nilsson, Algebraic decoding of the Zetterberg codes, {\it IEEE Trans. Inf. Theory}, 38(5), 1570-1573, 1992.

\bibitem{DJ-MC} R. Dougherty and H. Janwa, Covering radius computations for binary cyclic codes, {\it Math. Comp.}, 1991, 57(195): 415-434.

\bibitem{DZ-2} I. I. Dumer and V.A. Zinov'ev, Some new maximal codes over GF(4), {\it Probl. Peredachi Inf.}, 14(3), 24-34, 1978.

\bibitem{FXZ2025} W. Fang, J. Xu, and R. Zhu, Deep holes of twisted Reed?Solomon codes,
{\it Finite Fields Appl.}, 108, 102680, 2025.

\bibitem{GPZ} D. C.  Gorenstein, W. W. Peterson, and  N. Zierler, Two-error correcting Bose-Chaudhuri codes are quasi-perfect, {\it Info. Control}, 3(3), 291-294, 1960.
\bibitem{GS}  I. Gashkov and V. Sidel'nikov, Linear ternary quasi-perfect codes correcting double errors, {\it Probl. Peredachi Inf.}, 22(4), 43-48, 1986.

\bibitem{LMS} M. Giulietti, The geometry of covering codes: Small complete caps and saturating sets in Galois spaces, in Surveys in Combinatorics 2013, London Math. Soc. Lecture Note Ser., vol. 409, Cambridge Univ. Press, Cambridge, 2013, pp. 51-90.


\bibitem{TH-DAM} T. Helleseth, On the covering radius of cyclic linear codes and arithmetic codes, {\it Discrete Appl. Math.}, 11(2), 157-173, 1985.

\bibitem{HS} J. W. P. Hirschfeld and L. Storme, The packing problem in statistics, coding theory and finite projective spaces: Update 2001, In {\it Finite Geometries: Proceedings of the Fourth Isle of Thorns Conference} (pp. 201-246). Boston, MA: Springer US.



\bibitem{K} P. K\"{a}llquist, Decoding of the Zetterberg codes, Proc. Fourth Joint Swedish-Russian Workshop on Information theory, Gotland Sweden Aug. 27(1), 305-309, 1989.

\bibitem{finite} R. Lidl, H. Niederreiter, {\it Finite fields}. Cambridge University Press, 2003.
\bibitem{LX} S. Ling and C. Xing, {\it Coding theory: a first course}, Cambridge University Press, 2004.

\bibitem{MS} F. J. MacWilliams and N. J. A. Sloane, {\it The theory of error-correcting codes}. North-Holland, Amsterdam, 1977.


\bibitem{JLMS-2} J. A. Rush, Thin lattice coverings, {\em J. Lond. Math. Soc.}, 2(2): 193-200, 1992.
\bibitem{SHO} M. Shi, T. Helleseth, and F.{\"O}zbudak, Covering radius of generalized Zetterberg type codes over finite fields odd characteristic, {\it IEEE Trans. Inf. Theory}, 69(11), 7025-7048, 2023.
\bibitem{SLHO} M. Shi, S. Li, T. Helleseth, and  F. {\"O}zbudak, Determining the covering radius of all generalized Zetterberg codes in odd characteristic, {\it IEEE Trans. Inf. Theory}, 71(5), 3602-3613, 2025.

\bibitem{JLMS-3} A. Tiet\"{a}v\"{a}inen, Lower bounds for the maximum moduli of certain character sums, {\em J. Lond. Math. Soc.}, 2(2): 204-210, 1984.
\bibitem{perfect1} A. Tiet\"{a}v\"{a}inen, On the nonexistence of perfect codes over finite fields, {\it SIAM J. Appl. Math.}, 24(1), 88-96, 1973.
\bibitem{perfect2} A. Tiet\"{a}v\"{a}inen, A short proof for the nonexistence of unknown perfect codes over $GF(q)$, $q >2$, {\it Ann. Acad. Sci. Fenn. Ser. A I Math.}, 580, 1-6, 1974.

\bibitem{XY} M. Xiong and H. Yan, On covering radius of generalized Zetterberg codes,  {\it IEEE Trans. Inf. Theory}, doi: 10.1109/TIT.2026.3716210, 2026.

\bibitem{Z} L. H. Zetterberg, Cyclic codes from irreducible polynomials for correction of multiple errors,  {\it IRE Trans. Inf. Theory}, 8(1), 13-20, 1962.
\bibitem{perfect3} V. A. Zinovi'ev, V. K. Leonti'ev, The nonexistence of perfect codes over Galois fields, {\it Probl. Control Inf. Theory}, 2(2), 16-24, 1973.
\end{thebibliography}
\end{document}